\newtheorem{Theorem}{Theorem}
\newtheorem{Proposition}{Proposition}
\newtheorem{Corollary}{Corollary}
\newtheorem{Lemma}{Lemma}
\newtheorem{Remark}{Remark}
\newtheorem{Assumption}{Assumption}
\newcommand{\vect}[1]{\ensuremath{\boldsymbol{\mathrm{#1}}}}
\def\r{\rho}
\def\f{\varphi}
\DeclareMathOperator{\G}{\Box}
\DeclareMathOperator{\F}{\rotatebox[origin=c]{45}{$\Box$}}
\DeclareMathOperator*{\argmin}{argmin}
\def\BibTeX{{\rm B\kern-.05em{\sc i\kern-.025em b}\kern-.08em
    T\kern-.1667em\lower.7ex\hbox{E}\kern-.125emX}}
\begin{document}

\title{Data-Driven Distributionally Robust Control for Interacting Agents under Logical Constraints}

\author{Arash Bahari Kordabad, Eleftherios E. Vlahakis, Lars Lindemann, Sebastien Gros, Dimos V. Dimarogonas, and Sadegh Soudjani
\thanks{Arash Bahari Kordabad and Sadegh Soudjani are with the Max Planck Institute for Software Systems, Kaiserslautern, Germany. E-mail: {\tt\small\{arashbk, sadegh\}@mpi-sws.org.} Eleftherios E. Vlahakis and  Dimos V. Dimarogonas are with the Division of Decision and Control Systems, KTH Royal Institute of Technology, Stockholm, Sweden. E-mail: {\tt\small\{vlahakis, dimos\}@kth.se}.  Lars Lindemann is with the Thomas Lord Department of Computer Science and the Ming Hsieh Department of Electrical and Computer Engineering, Viterbi School of Engineering, University of Southern California, Los Angeles, USA. 
Email: {\tt\small llindema@usc.edu}. Sebastien Gros is with the Department of Engineering Cybernetics, Norwegian University of Science and Technology (NTNU), Trondheim, Norway. E-mail: {\tt\small  sebastien.gros@ntnu.no}. \newline This research is supported by the following grants: EIC 101070802 and ERC 101089047.}}

\maketitle
\begin{abstract}
In this paper, we propose a distributionally robust control synthesis for an agent with stochastic dynamics that interacts with other agents under uncertainties and constraints expressed by signal temporal logic (STL). We formulate the control synthesis as a chance-constrained program (CCP) with STL specifications that must be satisfied with high probability under all uncertainty tubes induced by the other agents. To tackle the CCP, we propose two methods based on concentration of measure (CoM) theory and conditional value at risk (CVaR) and compare the required assumptions and resulting optimizations. These approaches convert the CCP into an expectation-constrained program (ECP), which is simpler to solve than the original CCP. To estimate the expectation using a finite set of observed data, we adopt a distributionally robust optimization (DRO) approach. The underlying DRO can be approximated as a robust data-driven optimization that provides a probabilistic under-approximation to the original ECP, where the probability depends on the number of samples. Therefore, under feasibility, the original STL constraints are satisfied with two layers of designed confidence: the confidence of the chance constraint and the confidence of the approximated data-driven optimization, which depends on the number of samples. We then provide details on solving the resulting robust data-driven optimization numerically. Finally, we compare the two proposed approaches through case studies.
\end{abstract}
\begin{IEEEkeywords}
    Data-driven control, interacting agents, signal temporal logic,  chance-constrained program, distributionally robust optimization, concentration of measure, conditional value at risk.
\end{IEEEkeywords}
\section{Introduction}

In the rapidly evolving landscape of autonomous systems, designing control strategies for systems that interact with other systems in a shared environment is a critical challenge \cite{dennis2023verifiable,lindemann2025formal}. 
Consider a car, acting as a controlled agent, crossing an intersection in the presence of other cars that act as uncontrollable agents.
The controlled agent must account for all possible behaviors of other agents in the environment. From the perspective of the controlled agent, the other agents can be treated as uncontrollable agents with uncertainties in their behavior.  Figure~\ref{fig:0} illustrates this example.

\begin{figure}
\centering
\includegraphics[width=0.4\textwidth]{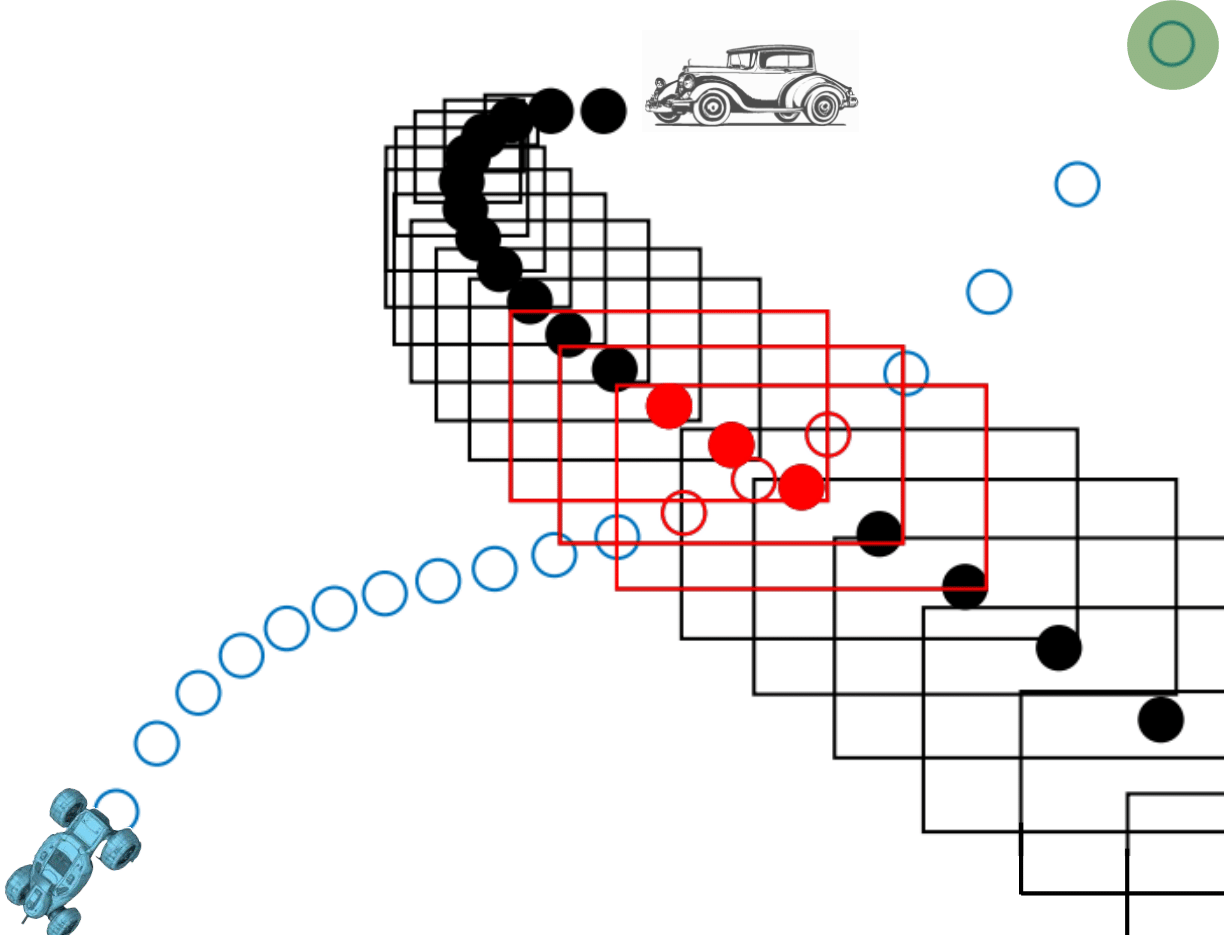}
\caption{The controlled agent in blue, aims to reach the location indicated by a green circle while avoiding collisions with the other agent. The other agent in black is modeled as an uncontrollable agent with an uncertainty tube from the perspective of the controlled agent. The red color indicates collisions along the path of the controlled agent indicated by blue circles.}
\label{fig:0}
\end{figure}

 A primary objective in this setting is often to achieve an individual goal for the controlled agent, while also accounting for interaction constraints with other uncontrollable agents, such as ensuring collision avoidance in shared environments. A formal framework for task specification and verification is thus crucial in such applications, to ensure that the agent's behavior aligns with the desired objectives while respecting constraints. In this paper, we consider these interaction constraints to be expressed as signal temporal logic (STL) specifications \cite{maler2004monitoring}. 

STL is a formal language that allows us to encode spatio-temporal constraints~\cite{maler2004monitoring}. The building blocks of STL include predicates, predicate functions, and logical and temporal operators. The predicate functions are defined over system trajectories and assign a real value to assess STL semantics. Building upon these components, one can precisely specify a wide range of complex spatio-temporal properties of a dynamical system. STL is equipped with quantitative semantics~\cite{donze2010robust}, also known as robustness function, which quantifies the robustness of satisfaction of the specifications at hand and can be used for efficient control synthesis. In a deterministic framework, it is possible to synthesize sound and complete algorithms for STL control~\cite{raman-cdc14}. However, when dynamical systems are subject to stochastic disturbances and STL constraints, the synthesis problem can be formulated as a stochastic optimal control problem, in which the formal specification can be formulated via probabilistic constraints, enabling a systematic approach to quantifying uncertainty and handling feasibility~\cite{farahani2017shrinking, Haesaert2018}. The resulting chance-constrained program (CCP) is challenging to solve due to the complexity of the robustness function of the STL constraints, which grows with the number of involved agents.

\smallskip
\noindent\textbf{Related works.} In a probabilistic setting, there are two main approaches to defining STL constraints. Many recent results pertain to the application of probability or risk measures to individual predicates, while several works extend the application of probability measures to the entire STL specification itself. To address the impact of critical tail events when STL formulas are violated,~\cite{Safaoui2020,LindemannCDC2020} propose risk STL, incorporating risk constraints over predicates while preserving Boolean and temporal operators. Authors of~\cite{Sadigh2016} introduce probabilistic STL, allowing expression of uncertainty by incorporating random variables into predicates. Similarly, in~\cite{Sadigh2018}, chance-constrained temporal logic formulates chance constraints as predicates to model perception uncertainty for autonomous vehicles. Stochastic temporal logic introduced in~\cite{Li2017,Kyriakis2019} is similar in syntax to chance-constrained temporal logic but is designed for stochastic systems, where perturbations affect system dynamics rather than predicate coefficients. Another approach for stochastic systems is presented in~\cite{Tiger2020}, extending STL with a sub-language for probabilities, observations, and predictions. The authors in~\cite{Ilyes2023} introduce the stochastic robustness measure that quantifies how well a stochastic trajectory satisfies a specification.

The recent top-down approaches that study STL probabilistic verification of stochastic systems considering chance constraints on the entire specification are in~\cite{ScherHSCC2022, Scher2022}. More closely related to our work is~\cite{FarahaniTAC2019}, in which the authors transform chance constraints into linear constraints using the concentration of measure (CoM) inequalities to provide a conservative approximation of the feasible domain. Robust techniques for STL synthesis have also been studied~\cite{SadBel:15,Farahani2015, RamanHSCC2015}, accounting for the most unfavorable variation of the underlying random parameters. Due to the nonconvex feasible domains typically induced by CCPs, many works pertain to the study of numerical methods to handle CCPs, such as randomized optimization in which the original optimization is approximated by a scenario program (SP) by sampling the uncertainty space. SP approaches have been studied for convex~\cite{Campi2011, Calafiore2010} and nonconvex~\cite{Grammatico2016} CCPs. 

Unlike stochastic optimization settings where the probability distribution is assumed to be known, \textit{distributionally robust optimization} (DRO) addresses the lack of information on the probability distribution by considering the worst-case distribution within an ambiguity set. Various methods exist for constructing ambiguity sets, such as moment ambiguity~\cite{delage2010distributionally}, Kullback–Leibler divergence-based ball~\cite{hu2013kullback}, and Wasserstein-based ball~\cite{pflug2007ambiguity}. The Wasserstein ambiguity set represents a statistical ball within the space of probability distributions surrounding the empirical distribution, with its radius measured using Wasserstein distance. Wasserstein DRO offers probabilistic guarantees based on finite samples via a tractable formulation~\cite{mohajerin2018data} and has attracted significant attention recently~\cite{kordabad2022safe,gao2023distributionally}.

In the DRO literature, several works have addressed CCP directly. An explicit reformulation for both individual and joint CCPs has been presented in~\cite{chen2024data}, where uncertainties are modeled as affine functions. The authors of~\cite{xie2021distributionally} reformulated CCP as a conditional value-at-risk (CVaR) mixed-integer program for affine functions. A more general approach for CCP has been proposed in~\cite{hota2019data}. However, dealing with expectation-constrained programs (ECPs) preserves linearity and convexity, making it often computationally simpler and more straightforward compared to evaluating or approximating CCPs, particularly for non-standard distributions.

In~\cite{Soudjani2018}, the authors propose a method to approximate a general CCP as an ECP using CoM property. More specifically, the approach solves CCP by requiring sampling-based constraints to be satisfied on average rather than individually, based on the assumption that the underlying distribution of the uncertain variables satisfies a CoM~\cite{Ledoux1999} property and exhibits bounded variance. 
The CoM property implies that if a set in a probability space has a measure of at least one-half, then most points in the probability space are close to the set. Equivalently, if a function defined on the probability space is regular enough, the chance of it deviating significantly from its expectation or median is small. When dealing with stochastic systems under uncertainty featuring the CoM property, one can anticipate these systems to display regular and predictable dynamic behavior.

Another approach for quantifying the influence of stochasticity on the system is by defining appropriate risk measures. For example, the value at risk (VaR) of a random variable provides a threshold value below which a certain percentage of outcomes will be concentrated. However, since VaR is often difficult to optimize (except for normal distributions), it is less practical compared to the conditional value at risk (CVaR), which has favorable mathematical properties such as convexity and monotonicity~\cite{rockafellar2000optimization}. CVaR quantifies the expected value of a random variable, representing, e.g., a cost, given that an outcome exceeds the VaR threshold. CVaR can be cast as the tight upper bound of VaR and is essentially the average of the worst outcome beyond a certain confidence level.

A framework is proposed in~\cite{lindemann2023safe} for performing probabilistic safe planning on deterministic systems operating in unknown, stochastic dynamic environments using conformal prediction. Leveraging the concept of conformal prediction, control synthesis for a dynamical system among uncontrollable agents has been investigated in~\cite{yu2023signal}. The control problem has been formulated as a worst-case bilevel mixed-integer program (MIP) using conformal prediction and probabilistic prediction regions.
In contrast, our proposed method assumes such prediction regions are available to the controlled agent in the form of uncertainty tubes and provides a data-driven safe control synthesis for a stochastic agent among other uncontrollable uncertain agents. Moreover, the challenges associated with MIP are transferred to the robustness function of STL specifications, and the chance constraint directly provides the confidence level of satisfaction.

\smallskip
\noindent\textbf{Contributions.} In this paper, we consider the control synthesis problem of a controlled stochastic agent interacting with other uncontrollable agents in a shared environment. The goal is to optimize a certain objective function for the controlled agent while satisfying a probabilistic STL specification under worst-case behavior of the uncontrollable agents by solving a stochastic optimal control problem. We then formulate this stochastic optimal control problem as a CCP, where the underlying system is subject to stochastic disturbances with unknown distribution and a set of STL specifications under uncertainty. We first assume that the disturbance distribution is consistent with the CoM property and that the underlying predicate functions involved in the STL constraints are Lipschitz continuous functions. We then transform the CCP into an ECP. Since the exact distribution is assumed to be unknown, we propose a data-driven Wasserstein distributionally robust approach that guarantees STL satisfaction with respect to the exact distribution with high confidence. Moreover, CVaR is employed as an alternative method to transform the CCP into an ECP. We then compare the resulting distributionally robust optimization with the one from the CoM approach.

The current paper extends the preliminary results presented in the conference paper~\cite{kordabad2024distributionally} in the following directions: 
(a) we address controlled agents in a dynamically changing environment while \cite{kordabad2024distributionally} is limited to static environments;
(b) we propose to use CVaR as an alternative approach for tackling CCPs;
(c) we make the computations robust to uncertainty tubes in the behavior of uncontrollable agents and provide a comparison between the results based on CoM and CVaR through detailed case studies; and
(d) we provide the proofs of statements with additional numerical examples to illustrate the proposed techniques.

\smallskip
\noindent\textbf{Outline.} The remainder of the paper is organized as follows. Section~\ref{sec:Prelim} provides preliminaries on dynamical systems and STL specifications. Section~\ref{sec:chance} details the resulting robust CCP and proposes two methods including CoM and CVaR, to transform the CCP into ECPs. Section~\ref{sec:ecp} presents a distributionally robust expression of the proposed methods and corresponding data-driven optimization. Section~\ref{sec:rp} provides details of solving the resulting data-driven optimization. Numerical studies are presented in Section~\ref{sec:examples}, illustrating the efficacy of the approach and comparing the proposed methods, while Section \ref{sec:concl} provides a conclusion of the paper and discusses potential directions for future work.
 
\section{Preliminaries}\label{sec:Prelim}
In this section, we provide preliminaries for the considered system of interacting agents and STL specification. 

\subsection{Dynamics}\label{sec:dyn}
In the following, we describe the considered setting that includes one controlled agent interacting with $n\geq 1$ uncontrollable agents.

We consider the controlled agent evolving in discrete time with state space $\mathcal{X}\subseteq\mathbb{R}^{n_x}$, input space $\mathcal{U}\subseteq\mathbb{R}^{n_u}$, and disturbance set $\mathcal{W}\subseteq\mathbb{R}^{n_x}$, with $n_x$ being the dimension of the state and disturbance vectors and $n_u$ being the dimension of the input vector. The dynamics of the agent can be modeled by the following linear difference equations perturbed by stochastic disturbances:
\begin{equation}\label{eq:discsys1}
    x_{k+1}=Ax_k+\sum_{i=1}^n B_iy^i_k+Cu_k+w_k,
\end{equation}
where the subscript $k\in \mathbb{N}_0$ is the time index with $\mathbb{N}_0$ being the set of non-negative integers, $x_k\in\mathcal{X}$ denotes the state of the controlled agent, $y^i_k\in\mathbb{R}^{n_i}$ denotes the state of the uncontrollable agent $i$, $1\leq i\leq n$, $u_k\in\mathcal{U}$ denotes the control input at time instant $k$, and $w_k\in\mathcal{W}$ is a vector of random variables at time~$k$, with a probability distribution $\mathcal{P}$ supported on $\mathcal{W}$. Matrices $A\in\mathbb{R}^{n_x\times n_x}$, $B_i\in\mathbb{R}^{n_x\times n_i}$, and $C\in\mathbb{R}^{n_x\times n_u}$ are known system matrices, and the initial state $x_0$ is assumed to be known. For any $k\in\mathbb N$, the state space model in~\eqref{eq:discsys1} provides the following explicit form for $x_k$, as a function of $x_0$, the state sequence of other agents ${y}^i_{0:k}:=[{y_0^i}^\intercal,\ldots,{y^i_{k-1}}^\intercal]^\intercal$, input sequence vector ${u}_{0:k}:=[u^\intercal_0,\ldots,u^\intercal_{k-1}]^\intercal$, 
and the disturbance sequence ${w}_{0:k}:=[w^\intercal_0,\ldots,w^\intercal_{k-1}]^\intercal$:
\begin{equation}\label{eq:traj1}
x_{k} = A^{k} x_0 + \sum_{l=0}^{k-1}A^{k-l-1} \left(\sum_{i=1}^{n} B_i y^i_l+C u_l + w_l\right).
\end{equation}
The uncontrollable agents are modeled with linear dynamics that depend on the state of the controlled agent, the states of other uncontrollable agents, and may include some uncertainties:
\begin{equation*}
y^i_{k+1}= D_ix_k+\sum_{j=1}^n E_{i,j} y^j_{k}+\theta^i_k,\qquad \forall i\in\{1,2,\ldots, n\}, 
\end{equation*}
where $D_i\in \mathbb{R}^{n_i\times n_x}$ and $E_{i,j}\in \mathbb{R}^{n_i\times n_j}$, $\forall i\in\{1,2,\ldots, n\}$,  are known matrices, and $\theta^i_k\in \mathcal{T}_i\subset \mathbb{R}^{n_i}$, $\forall i\in\{1,2,\ldots, n\}$, $\forall k\geq 0$, are uncertain variables that the controlled agent is not aware of, and belong to the compact set $\mathcal{T}_i$. The set $\mathcal{T}_i$ is assumed to be known for the controlled agent. Similarly, the state $y^i_k$ can be written as follows:
\begin{equation}\label{eq:traj2}
y^i_{k}=  E^k_{i,i} y^i_{0}+ \sum_{l=0}^{k-1}E_{i,i}^{k-l-1} \left(\sum_{\substack{j=1\\ j\neq i}}^{n} E_{i,j}  y^j_{l}+D_ix_l+\theta^i_l\right),
\end{equation}
for all $k\in\mathbb N$. The augmented state of all agents, including the controlled and the uncontrollable agents, can be denoted by $z_{k}=[x^\intercal_k,{y_k^1}^\intercal, \ldots, {y_k^n}^\intercal]^\intercal\in \mathbb{R}^{n_z}$, with $n_z=n_x+\sum_{i=1}^n n_i$, and we have:
\begin{equation}\label{eq:dyn:ag}
    z_{k+1}\!\!=\!\!\underbrace{\left[\begin{array}{ c | c c c }
    A & B_1 & \cdots & B_n  \\
    \hline
 D_1 &  &  & \\
  \vdots &  & [E] & \\
   D_n &  &  & 
  \end{array}\right]}_{=:\bar A} \!\!z_k\!+\!\! \left[\begin{array}{ c }
    C  \\
    \hline
    0 
  \end{array}\right]\!\!u_k\!+\!\!\left[\begin{array}{ c }
    w_k  \\ 
    \hline
    \theta^1_k \\ \vdots \\ \theta^n_k
  \end{array}\right]\!,
\end{equation}
where matrix $E$ is a block matrix with the elements $E_{i,j}$ located in the $i$-th row block and $j$-th column block. Similar to \eqref{eq:traj1} and \eqref{eq:traj2}, the augmented state $z_{k}$ can be expressed linearly as a function of the initial states, $x_0$, $y^i_0$, $\forall i\in\{1,\ldots,n\}$, input sequence $u_{0:k}$, disturbance $w_{0:k}$ and uncertain variables $\theta^i_{0:k}$, $\forall i\in\{1,\ldots,n\}$, where ${\theta}^i_{0:k}:=[{\theta_0^i}^\intercal,\ldots,{\theta_{k-1}^i}^\intercal]^\intercal$. This expression is linear for agents with linear dynamics. We will utilize this observation in this paper.

\begin{Remark} 
The variables $\theta_k^i$ in \eqref{eq:traj2} induce an \emph{uncertainty tube} for the variables $y_k^i$. Such tubes can be constructed with certain confidence using data from the agents and approaches such as conformal prediction \cite{lindemann2023safe,yu2023signal}.
In this work, we have assumed that such uncertainty tubes are already available to the controlled agent. Any confidence attached to these uncertainty tubes can be combined a posterior with the probabilistic guarantees provided by our approach.
\end{Remark}

The goal is to synthesize a controller for the controlled agent to accomplish tasks expressed in signal temporal logic (STL) while interacting with uncontrollable agents. In the following, the preliminaries for STL are presented.

\subsection{STL specifications}\label{sec:stl}
A finite run of system \eqref{eq:dyn:ag} can be considered as a signal 
$\xi = \{z_0,z_1,z_2,\dots, z_N\}$, which is a sequence of observed augmented states with finite horizon $N$.
We consider STL formulas defined recursively according to the grammar~\cite{MalNic:04}: 
\begin{equation*}
\varphi ::= \top\mid \pi \mid \neg \varphi \mid\varphi \land \psi  \mid \varphi\, {U}_{[a,b]}\,\psi,
\end{equation*}
where $ \top$ is the \emph{true} predicate, $\pi$ is a predicate whose truth value is determined by the sign
of a corresponding predicate function of state variables, i.e., $\pi = \{\alpha(z)\ge 0\}$ with $\alpha:\mathbb R^{n_z}\rightarrow\mathbb R$, $\varphi$ and 
$\psi$ are STL formulas,
$\neg$ and $\land$ indicate \emph{negation} and \emph{conjunction} of formulas, and ${ U}_{[a,b]}$ is the \emph{until} operator that operates withing the time interval $[a,\, b]$ with $0\leq a\leq b$. A run $\xi$ satisfies $\varphi$ at time $k$, 
denoted by $(\xi,k) \models \varphi$, if the sequence 
$z_kz_{k+1}\ldots z_N$ satisfies $\varphi$. We say 
$\xi$ satisfies $\varphi$ if
$(\xi,0) \models \varphi$. Boolean semantics of STL formulas are defined as follows:
  \begin{flalign*}
        & ( \xi,k)\models \pi \quad\quad\quad\, \Leftrightarrow \quad  \alpha( z_k)\geq 0,\\
        & ( \xi,k)\models \lnot \varphi \quad \quad\,\,\, \Leftrightarrow \quad  \lnot(( \xi,k)\models \varphi),
        \\
        & ( \xi,k)\models \varphi \land \psi \quad\,\,  \Leftrightarrow \quad  ( \xi,k)\models\varphi \wedge ( \xi,k)\models\psi, \\
        & ( \xi,k)\models \varphi\, {U}_{[a,b]}\, \psi  \Leftrightarrow \quad \exists k'\in\{a,\ldots, b\}, ( \xi,k+k')\models \psi  \\ &\qquad \, \qquad\qquad \qquad\quad \wedge\forall k''\in\{0, \ldots, k'\}, ( \xi,k+k'')\models\varphi.
    \end{flalign*}
Additionally, we derive the \emph{disjunction} operator as $\varphi \lor \psi:=\neg(\neg\varphi\land\neg\psi)$,
the \emph{eventually} operator as $\F_{[a,b]}\varphi := \top{ U}_{[a,b]} \varphi$,
and the \emph{always} operator as $\G_{[a,b]}\varphi:=\neg \F_{[a,b]}\neg\varphi$. Thus $(\xi,k) \models \F_{[a,b]} \varphi$ if $\varphi$ holds at some time instant between $a+k$ and $b+k$ and
$(\xi,k) \models \G_{[a,b]} \varphi$ if $\varphi$ holds
at every time instant between $a+k$ and $b+k$.

\smallskip\noindent\textbf{STL Robustness.}
In addition to the above Boolean semantics, the quantitative semantics (a.k.a. robustness function) of STL \cite{req_mining_hscc2013} assigns to each formula $\f$ a real-valued function $\r^\f$ of signal $\xi$ and $k$ such that $\r^\f> 0$ implies $(\xi,k) \models \f$. The robustness of a formula $\varphi$ with respect to a run $\xi$ at time $k$ is defined recursively as
\begin{equation*}
\renewcommand{\arraystretch}{1.2}
\begin{array}{ll}
    \rho^{\top}(\xi,k)                 &= +\infty \\
    \rho^{\mu}(\xi,k)                  &= \alpha(z_k) \\
    \rho^{\lnot\varphi}(\xi,k)            &= -\rho^{\varphi}(\xi,k) \\
    \rho^{\varphi \wedge \psi}(\xi,k)     &= \min(\rho^{\varphi}(\xi,k), \rho^{\psi}(\xi,k)) \\
    \rho^{\varphi\, {U}_{[a,b]}\, \psi}(\xi,k) &= \max_{k' \in \{a,\ldots, b\}} \Big( \min\big( \rho^{\psi}(\xi,k+k'), \\
                                           &\quad \min_{k'' \in \{0,\ldots,k'\}} \rho^{\varphi}(\xi,k+k'') \big) \Big).
\end{array}
\end{equation*}
The value of the robustness function $\rho^{\varphi}(\xi,k)$ can be interpreted as how robustly the trajectory $\xi$ satisfies a given STL formula $\varphi$. The robustness of the formulas $\F_{[a,b]}\varphi$ and $\G_{[a,b]}\varphi$ are
\begin{flalign*}
    \r^{\F_{[a,b]} \varphi}(\xi,k)= & \max_{k'\in\{a,\ldots, b\}} \r^{\varphi}(\xi,k+k'),\nonumber\\
    \r^{\G_{[a,b]} \varphi}(\xi,k) =&\min_{k'\in \{a,\ldots, b\}}\r^{\varphi}(\xi,k+k').\nonumber
\end{flalign*}

As described, the robustness function is commonly defined by the arguments of the system trajectory and the time. However, for the scope of this study, it is more comprehensible to explicitly define the robustness as a function of input, disturbance, initial state, and time, incorporating the system dynamics. In essence, as discussed earlier, by accounting for the dynamics, the system trajectory is determined by the initial state, input, and disturbance sequence, and the mapping $(u_{0:N},w_{0:N},\theta^{1:n}_{0:N},z_0)\rightarrow \xi$ is linear. Therefore, for the sake of clarity, we define a dynamics-dependent function $\varrho^{\varphi}$ as follows:
\begin{equation*}
     \varrho^{\varphi} (u_{k:N},w_{k:N},\theta^{1:n}_{k:N},z_k,k):=\rho^{\varphi}(\xi,k),
\end{equation*}
where ${\theta}^{1:n}_{0:k}:=[{\theta_{0:k}^1}^\intercal,\ldots,{\theta_{0:k}^n}^\intercal]^\intercal$. Moreover, at time $k=0$, we eliminate $z_k$ and $k$ from the argument of $\varrho^{\varphi}$ because $z_0$ is given, and we define:
\begin{equation*}
     \varrho_0^{\varphi} (\vect u,\vect w,\vect \theta):=\varrho^{\varphi} (\vect u,\vect w,\vect \theta,z_0,0), 
\end{equation*} 
where $\vect u:=u_{0:N}\in \mathbb U:=\mathcal{U}^N$,  $\vect w:=w_{0:N}\in \mathbb{W}:=\mathcal{W}^N$ and $\vect\theta:=\theta^{1:n}_{0:N}\in \mathbb{T}:=\mathcal{T}_1^N\times\ldots\times \mathcal{T}_n^N$. By defining these notations in the context of STL, the aim will be to provide a control input $\vect u$ such that the function $ \varrho_0^{\varphi}$ becomes positive, or specifically lower bounded by a pre-defined positive robustness level $r_0$. Since this function is affected by the unknown disturbance $\vect w$ and uncertainties $\vect \theta$, the purpose would be to satisfy inequality
\begin{equation*}
    \varrho_0^{\varphi} (\vect u,\vect w,\vect \theta)\geq r_0,
\end{equation*}
 with probability at least a given threshold for all uncertain variables in $\vect \theta$.
 The next section will formulate this problem as a robust chance-constrained program (CCP), and provide two approaches to deal with the probabilistic constraint.

\section{Chance-constrained program}\label{sec:chance}
In this section, we formulate a CCP that satisfies the STL specification with at least a pre-defined probability level (due to the stochasticity of the controlled agent) for all possible behaviors of the other uncontrollable agents, modeled as uncertainty tubes.

Let $(\mathbb{W},\mathcal F, P)$ denote the probability space for the random variable $\vect w\in\mathbb{W}$, where $\mathbb{W}$ is the sample space, $\mathcal F$ is a sigma-algebra on $\mathbb{W}$, and $P:=\mathcal{P}^N$ is the $N$-fold product of the probability measure $\mathcal{P}$. We then define the following CCP:
\begin{subequations}\label{eq:CCP}
\begin{align}
    \min_{\vect u \in \mathbb{U}}\quad & J(\vect u),\\
    \mathrm{s.t.}\quad & P\left\{\varrho_0^{\varphi}(\vect u,\vect w,\vect\theta)\geq r_0\right\}\geq 1-\varepsilon,\quad \forall \vect\theta\in \mathbb{T},   
\end{align}    
\end{subequations}
where $\varepsilon\in(0,1)$ is the constraint violation tolerance and $J$ is a lower semi-continuous cost function. This optimization aims to find an optimal control sequence that minimizes a control-dependent performance function $J$ while ensuring that the STL constraint $\varphi$, with robustness $r_0>0$, is satisfied with probability at least $(1-\varepsilon)$ for all $\vect\theta\in  \mathbb{T}$.

Note that the performance function $J$ is commonly defined as a function of state $x_{0:N}$ and input $\vect u$. However, for the sake of clarity, all state-dependent requirements, such as reaching a specific target set, can be incorporated into the STL specification.

The following remark provides an alternative way of formulating the considered problem as a robust CCP.
\begin{Remark} As an alternative to~\eqref{eq:CCP}, another approach to treat the uncertainty $\vect \theta$ can be formulated as the following CCP:
    \begin{subequations}\label{eq:CCP1}
\begin{align}
    \min_{\vect u \in \mathbb{U}}\quad & J(\vect u),\\
    \mathrm{s.t.}\quad & P\left\{\forall \vect\theta\in \mathbb{T}\, :\, \varrho_0^{\varphi}(\vect u,\vect w,\vect\theta)\geq r_0\right\}\geq 1-\varepsilon.\quad 
\end{align}    
\end{subequations}
It can be shown that this formulation is more conservative than~\eqref{eq:CCP}. To show this, let us assume ${\vect\theta}^\star\in \argmin_{\vect\theta\in \mathbb{T}}P\left\{\varrho_0^{\varphi}(\vect u,\vect w,\vect\theta)\geq r_0\right\}$, then from the optimality, we have:
\begin{equation*}
    \min_{\vect\theta\in \mathbb{T}} \varrho_0^{\varphi}(\vect u,\vect w,\vect\theta)\leq \varrho_0^{\varphi}(\vect u,\vect w,{\vect\theta}^\star),
\end{equation*}
which yields
\begin{equation*}
    {P}\left\{\min_{\vect\theta\in \mathbb{T}} \varrho_0^{\varphi}(\vect u,\vect w,\vect\theta)\!\geq\! r_0\right\} \leq \min_{\vect\theta\in \mathbb{T}}P\left\{\varrho_0^{\varphi}(\vect u,\vect w,\vect\theta)\geq r_0\right\},
\end{equation*}
for all $\vect u\in\mathbb U$. Consequently, the feasible domain of \eqref{eq:CCP1} is a subset of the feasible domain of~\eqref{eq:CCP}, and the constraint satisfaction of~\eqref{eq:CCP1} ensures the satisfaction of~\eqref{eq:CCP}, though the reverse does not hold in general. However, due to the potential infeasibility issues when solving~\eqref{eq:CCP1}, this paper concentrates on solving~\eqref{eq:CCP}.   
\end{Remark}

In the following subsections, we propose two approaches to address the challenges of solving CCP~\eqref{eq:CCP}, arising from trade-offs between conservativeness, computational complexity, and required assumptions. The concentration of measure (CoM) approach is computationally efficient and directly converts the CCP into an expectation-constrained program (ECP) by tightening the probabilistic constraint. The conditional value at risk (CVaR) approach, as a widely used risk measure, on the other hand, uses a minimization within its definition to reformulate the CCP as an ECP.

\subsection{Concentration of measure (CoM)}\label{sec:COM}
In the following, we use the concept of concentration of measure (CoM) to treat the chance constraint in~\eqref{eq:CCP}. This approach was first proposed in~\cite{Soudjani2018}, and here we adopt the approach over dynamical systems with the STL specifications. We make the following assumptions on the distribution of $\vect w$. 

\begin{Assumption}
	\label{ass:light-tailed}
	(Light-tailed distribution) The distribution of random variable $\vect w$ is a Light-tailed distribution. More specifically, there exists $a>1$ such that $$\mathcal C:=\mathbb{E}_P[\exp{\|\vect w\|^a}]<\infty.$$
\end{Assumption}
Assumption~\ref{ass:light-tailed} holds for many distributions, e.g., multivariate normal distribution, exponential distribution, log-normal distribution, and all distributions with bounded support.
\begin{Assumption}
	\label{ass:concentration}
	(CoM property) For stochastic variable $\vect w\in \mathbb W$, there exists a monotonically decreasing function $h:\mathbb R^{\ge 0}\rightarrow[0,1]$ such that
	\begin{equation}
	\label{eq:concentration}
	 P\left\{|f(\vect w)-\mathbb E\left[f(\vect w)\right]|\le t\right\}\ge 1-h(t),\quad\forall t\ge 0,
	\end{equation}
	holds for any Lipschitz continuous function $f:\mathbb{W}\rightarrow\mathbb R$ with Lipschitz constant $1$.
\end{Assumption}

We recall that a function $f:\mathbb{W}\rightarrow \mathbb{R}$ is a Lipschitz continuous function if there exists $L\geq 0$ such that for any two vectors $\vect w_1,\vect w_2\in\mathbb{W}$, $$
    \frac{|f(\vect w_1)-f(\vect w_2)|}{d_{\mathbb{W}}(\vect w_1,\vect w_2)}\leq L<\infty,$$ where $d_{\mathbb{W}}$ denotes a metric on the set $\mathbb W$. Constant $L$ is referred to as the Lipschitz constant. Throughout this paper, we use the 2-norm on $\mathbb{W}$ to calculate the Lipschitz constants, given by $d_{\mathbb{W}}(\vect w_1,\vect w_2)=\sqrt{ (\vect w_1-\vect w_2)^\intercal (\vect w_1-\vect w_2)}$. Note that, by employing alternative metrics, the general results of the paper remain valid, and changing the metric only affects the values of the Lipschitz constant.

Assumption~\ref{ass:concentration} also holds for many distributions. Examples of different distributions with the CoM property and corresponding $h$ functions can be found e.g., in~\cite{Soudjani2018, ledoux2001concentration}. For instance,  the standard multi-variate Gaussian distribution satisfies \eqref{eq:concentration} with $h(t) = \min\{2e^{-2t^2/\pi^2},1\}$~\cite{barvinok1997measure,pisier1999volume}. Observe that if we substitute $h(\cdot)$ with another monotonically decreasing function $\bar h(\cdot)$ such that $\bar h(\cdot)\geq h(\cdot)$, the inequality \eqref{eq:concentration} remains valid when using $\bar h(\cdot)$.

In the following, we make a regularity assumption on the STL predicate functions. This assumption allows us to establish the Lipschitz continuity of robustness functions, formulate a well-defined chance constraint for the robustness function,  and utilize the CoM property.

\begin{Assumption}\label{Assum:predicate} (Lipschitz continuity of predicate functions)
    We assume that the predicate functions are Lipschitz continuous.
\end{Assumption}
Note that, unlike restrictive assumptions in the literature, such as the linearity of predicate functions~\cite{kordabad2024control}, which lead to sets composed of multiple hyperplanes, the assumption above does not impose such restrictions on the shape of the sets. For example, the predicate function $\alpha(z)=1-\|z\|$, which describes the interior of a circle centered at the origin with a radius of one, is Lipschitz continuous with a Lipschitz constant of one.

In the following, we provide two lemmas and a theorem that enable us to use the CoM property in the context of STL and its robustness function.

   \begin{Lemma}\label{lem:ab}
Suppose that $a_1, a_2, b_1, b_2\in\mathbb R$ are such that $a_1\leq a_2$ and $b_1\leq b_2$. Then,   
\begin{equation*}
        |a_1-b_1|\leq \max\{|a_1-b_2|,|a_2-b_1|\}.
\end{equation*}
\end{Lemma}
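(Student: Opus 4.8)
The plan is to reduce the claimed inequality to the elementary monotonicity of the $\max$ operator combined with the two ordering hypotheses $a_1 \le a_2$ and $b_1 \le b_2$. The starting observation is that any absolute value is the maximum of the signed quantity and its negation, i.e. $|a_1 - b_1| = \max\{a_1 - b_1,\, b_1 - a_1\}$. It then suffices to bound each of these two signed quantities separately by one of the two terms on the right-hand side.

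For the first branch I would use $a_1 \le a_2$ to write $a_1 - b_1 \le a_2 - b_1 \le |a_2 - b_1|$, and for the second branch I would use $b_1 \le b_2$ to write $b_1 - a_1 \le b_2 - a_1 = -(a_1 - b_2) \le |a_1 - b_2|$. Combining these through the monotonicity of the maximum, namely $\max\{p,q\} \le \max\{P,Q\}$ whenever $p \le P$ and $q \le Q$, immediately yields $|a_1 - b_1| \le \max\{|a_2 - b_1|,\, |a_1 - b_2|\}$, which is the assertion. An equivalent and perhaps more transparent route is a direct case split on the sign of $a_1 - b_1$: if $a_1 \ge b_1$ then $0 \le a_1 - b_1 \le a_2 - b_1$, so the term $|a_2 - b_1|$ dominates; and if $a_1 < b_1$ then $0 < b_1 - a_1 \le b_2 - a_1$, so the term $|a_1 - b_2|$ dominates. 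The intuition in both routes is the same: since $a_1$ and $b_1$ are each the smaller element of their respective pairs, we enlarge whichever of the two is currently the right endpoint to the corresponding larger value, which can only increase the gap.

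I expect no genuine obstacle here, as the statement is purely order-theoretic and uses none of the structural assumptions on $\vect w$ (Assumptions~\ref{ass:light-tailed}--\ref{ass:concentration}) nor the Lipschitz regularity of the predicate functions (Assumption~\ref{Assum:predicate}); the only subtlety is bookkeeping, namely matching each signed difference to the correct maximand. Its role is clearly that of a technical preparatory step, presumably to control the distance between two robustness values—each a nested $\min/\max$ of predicate evaluations—whose inner arguments are ordered, so I would keep the write-up to the few lines sketched above rather than expand it.
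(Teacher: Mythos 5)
Your proposal is correct and essentially the same as the paper's proof: the paper argues exactly by your second route, splitting on whether $a_1\leq b_1$ (where $a_1\leq b_1\leq b_2$ gives $|a_1-b_1|\leq|a_1-b_2|$) or $b_1\leq a_1$ (where $b_1\leq a_1\leq a_2$ gives $|a_1-b_1|\leq|a_2-b_1|$). Your first route, via $|a_1-b_1|=\max\{a_1-b_1,\,b_1-a_1\}$ and monotonicity of $\max$, is only a cosmetic repackaging of the same two estimates.
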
 
\begin{proof}
    Suppose that $a_1\leq b_1$. Then $a_1\leq b_1\leq b_2$ and $|a_1-b_1|\leq |a_1-b_2|$. In the case that  $b_1\leq a_1$, we have $b_1\leq a_1\leq a_2$ and $|a_1-b_1|\leq |a_2-b_1|$.
\end{proof}
\begin{Lemma}\label{lem:lip:minmax}
    For any two Lipschitz functions $f_1:X\rightarrow\mathbb R$ and $f_2:X\rightarrow\mathbb R$, $\max(f_1, f_2)$ and $\min(f_1, f_2)$ are Lipschitz functions with $L := \max(L_1, L_2)$, where $L_i$ is the Lipschitz constant of $f_i$, $i\in\{1,2\}$.
\end{Lemma}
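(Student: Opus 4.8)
The plan is to fix two arbitrary points $x,y\in X$ and bound $|g(x)-g(y)|$ by $L\,d_X(x,y)$ for $g\in\{\max(f_1,f_2),\min(f_1,f_2)\}$, where $d_X$ denotes the metric on $X$ and $L=\max\{L_1,L_2\}$. It suffices to treat one of the two operators, since $\max(f_1,f_2)=-\min(-f_1,-f_2)$ and each $-f_i$ is Lipschitz with the same constant $L_i$; hence any bound proved for $\min$ transfers verbatim to $\max$. I will therefore establish the pointwise inequality
\[
|\min(f_1(x),f_2(x))-\min(f_1(y),f_2(y))|\le\max\{|f_1(x)-f_1(y)|,\,|f_2(x)-f_2(y)|\},
\]
from which the claim follows because the right-hand side is at most $\max\{L_1,L_2\}\,d_X(x,y)=L\,d_X(x,y)$ by the Lipschitz property of each $f_i$.

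The argument for the displayed inequality is a case analysis on which index attains the minimum at $x$ and at $y$. If the same index attains the minimum at both points—say $\min$ equals $f_1$ at both, i.e. $f_1(x)\le f_2(x)$ and $f_1(y)\le f_2(y)$—then the left-hand side collapses to $|f_1(x)-f_1(y)|$ and the bound is immediate; the sub-case with index $2$ is identical. The crux is the \emph{cross} case, where the minimizing indices differ; say $f_2(x)\le f_1(x)$ and $f_1(y)\le f_2(y)$, so the left-hand side equals $|f_2(x)-f_1(y)|$. Here I invoke Lemma~\ref{lem:ab} with $a_1=f_2(x)\le a_2=f_1(x)$ and $b_1=f_1(y)\le b_2=f_2(y)$, which gives
\[
|f_2(x)-f_1(y)|\le\max\{|f_2(x)-f_2(y)|,\,|f_1(x)-f_1(y)|\},
\]
i.e. exactly the matched same-function differences on the right-hand side of the target. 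The remaining cross sub-case (orderings reversed) follows by swapping the roles of $f_1$ and $f_2$.

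I expect the only genuine obstacle to be this cross case: the difference of minima pairs the value of one function at $x$ with that of the \emph{other} function at $y$, so it is not a priori controlled by a single Lipschitz estimate. Lemma~\ref{lem:ab} is precisely the device that ``re-matches'' this mixed difference to the two quantities $|f_i(x)-f_i(y)|$, after which Lipschitz continuity of each $f_i$ closes the bound and taking $L=\max\{L_1,L_2\}$ yields the stated constant. As an aside, $\max$ also admits a one-line direct proof via $\max(f_1(x),f_2(x))\le\max\{f_1(y)+L_1 d_X(x,y),\,f_2(y)+L_2 d_X(x,y)\}\le\max(f_1(y),f_2(y))+L\,d_X(x,y)$ together with the symmetric inequality, but the case analysis above is the route that puts Lemma~\ref{lem:ab} to work.
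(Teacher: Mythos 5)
Your proposal is correct and follows essentially the same route as the paper: a case analysis on which function attains the minimum at each point, with Lemma~\ref{lem:ab} invoked exactly to re-match the cross-case difference $|f_2(x)-f_1(y)|$ to the same-function differences $|f_i(x)-f_i(y)|$, after which the Lipschitz bounds and $L=\max\{L_1,L_2\}$ finish the argument. Your reduction $\max(f_1,f_2)=-\min(-f_1,-f_2)$ is a slightly tidier way to dispatch the $\max$ case than the paper's ``similarly'' remark, but the substance is the same.
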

\begin{proof}
 Since the functions $f_1$ and $f_2$ are Lipschitz continuous, we have
    \begin{equation*}
         |f_i(x_1)-f_i(x_2)|\leq L_id_X(x_1,x_2), \,\, i\in\{1,2\}, \,\, \forall x_1,x_2\in X,     
    \end{equation*}
    for some $ L_1, L_2\geq 0$, where $d_X$ is a metric on the set $X$. We first prove the lemma for $\min(f_1, f_2)$. For any $x_1, x_2\in X$, we have
\begin{align}\label{ineq:lem2}
    &|\min(f_1(x_1), f_2(x_1))-\min(f_1(x_2),f_2(x_2))|\leq \max \Big\{
   \nonumber \\ &|f_1(x_1)-f_1(x_2)|, |f_1(x_1)-f_2(x_2)|, |f_2(x_1)-f_1(x_2)|, \nonumber\\  &|f_2(x_1)-f_2(x_2)|\Big\}\leq \max \Big\{|f_1(x_1)-f_1(x_2)|,\nonumber \\ &|f_2(x_1)-f_2(x_2)|\Big\}\leq \max\{L_1,L_2\}d_X(x_1,x_2),
\end{align}
where the first inequality follows from four different cases that might happen for the two $\min$ operators, and the second inequality can be concluded from Lemma~\ref{lem:ab} for $|f_1(x_1)-f_2(x_2)|$ and $|f_2(x_1)-f_1(x_2)|$ cases. For instance, for $|f_1(x_1)-f_2(x_2)|$ case, we can set $a_1=f_1(x_1)$, $a_2=f_2(x_1)$, $b_1=f_1(x_2)$ and $b_2=f_2(x_2)$. From the facts that $\min(f_1(x_1), f_2(x_1))=f_1(x_1)$ and $\min(f_1(x_2), f_2(x_2))=f_2(x_2)$, the inequalities of Lemma~\ref{lem:ab}, stating $a_1\leq a_2$ and $b_1\leq b_2$ are fulfilled and therefore, 
\begin{align*}
 |f_1(x_1)-f_2(x_2)|\leq \max \Big\{&|f_1(x_1)-f_1(x_2)|,\nonumber\\ &|f_2(x_1)-f_2(x_2)|\Big\}.   
\end{align*}
A similar statement holds for the case of $|f_2(x_1)-f_1(x_2)|$ in the second inequality of \eqref{ineq:lem2}. Similarly, we can show that the lemma holds for the $\max$ operator.
\end{proof}

Note that Lemma~\ref{lem:lip:minmax} can be readily extended for the cases where we have more than two functions inside the $\min$ or $\max$ operators and their combinations as stated in the following corollary.

\begin{Corollary}\label{col:lip} Functions $\min(f_1,\ldots,f_n)$ and $\max(f_1,\ldots,f_n)$ are Lipschitz functions with constant $\max\{L_1,\ldots, L_n\}$ with $f_i$ being a Lipschitz function with constant $L_i$ for all $i\in\{1,\ldots,n\}$. Moreover, the result holds for the combination of $\min$ and $\max$ with any number of operators. For instance, $\min(f_1,\ldots,\max(f_j,\ldots,f_{l}),\ldots, f_n)$ with $1\leq j<l<n$, is a Lipschitz function with constant $\max\{L_1,\ldots, L_n\}$.
\end{Corollary}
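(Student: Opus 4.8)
The plan is to prove the corollary by induction, bootstrapping entirely from Lemma~\ref{lem:lip:minmax}, which already handles the two-argument case. I would treat the statement in two stages: first a single $\min$ or $\max$ over $n$ arguments, and then arbitrary nested combinations.

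For the first stage I would induct on the number $n$ of arguments. The base case $n=2$ is precisely Lemma~\ref{lem:lip:minmax}. For the inductive step I exploit associativity of $\min$, writing
\begin{equation*}
\min(f_1,\ldots,f_n)=\min\big(\min(f_1,\ldots,f_{n-1}),\,f_n\big).
\end{equation*}
By the induction hypothesis, $g:=\min(f_1,\ldots,f_{n-1})$ is Lipschitz with constant $\max\{L_1,\ldots,L_{n-1}\}$. Applying Lemma~\ref{lem:lip:minmax} to the two functions $g$ and $f_n$ shows that $\min(g,f_n)$ is Lipschitz with constant $\max\{\max\{L_1,\ldots,L_{n-1}\},L_n\}=\max\{L_1,\ldots,L_n\}$, which is exactly the claim. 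The argument for $\max$ is identical.

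For the second stage — arbitrary nestings such as $\min(f_1,\ldots,\max(f_j,\ldots,f_l),\ldots,f_n)$ — I would use structural induction on the expression tree built from $\min$ and $\max$ nodes with the $f_i$ at the leaves. The induction hypothesis is that every subexpression is Lipschitz with constant equal to the maximum of the Lipschitz constants of the leaves it contains; a single leaf $f_i$ is Lipschitz with constant $L_i$, which gives the base case. At an internal node applying $\min$ or $\max$ to finitely many already-Lipschitz subexpressions $g_1,\ldots,g_m$ with constants $\Lambda_1,\ldots,\Lambda_m$, the first stage shows the node is Lipschitz with constant $\max\{\Lambda_1,\ldots,\Lambda_m\}$; since each $\Lambda_t$ is itself the maximum over the leaves beneath $g_t$, the overall constant is the maximum over all leaves of the node, closing the induction.

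The argument is essentially bookkeeping, and I expect no genuine obstacle: the substantive fact — that combining two Lipschitz functions by $\min$ or $\max$ does not inflate the constant beyond the larger of the two — is already established in Lemma~\ref{lem:lip:minmax}. The only point requiring care is making the induction hypothesis strong enough, namely tracking the constant as the maximum over the involved leaves rather than merely asserting Lipschitz continuity. This is what guarantees that the constant provably never grows as the nesting deepens, yielding the uniform bound $\max\{L_1,\ldots,L_n\}$ instead of some accumulating quantity.
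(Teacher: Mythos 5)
Your proposal is correct and is essentially the paper's own approach: the paper states Corollary~\ref{col:lip} without a separate proof, remarking only that Lemma~\ref{lem:lip:minmax} ``can be readily extended,'' and your two-stage argument---induction on the number of arguments via associativity of $\min$/$\max$, followed by structural induction on the expression tree with the constant tracked as the maximum over the leaves---is exactly that routine extension, carried out carefully. No gaps.
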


The following proposition uses Corollary~\ref{col:lip} and shows that the robustness function of an STL specification for linear systems is a Lipschitz function. Moreover, it provides the corresponding Lipschitz constant. 

\begin{Proposition}\label{prop:lip}
\textbf{(Lipschitz constant of the robustness function)}  For any STL specification $\varphi$ with Lipschitz predicate functions satisfying Assumption~\ref{Assum:predicate},  $\varrho_0^{\varphi}(\vect u,\vect w, \vect\theta)$ is Lipschitz continuous with respect to $\vect w$ for the system defined in~\eqref{eq:discsys1} and~\eqref{eq:traj2}. The Lipschitz constant will be $L_{\varphi} = L_1L_2$, where $L_1$ is the maximum Lipschitz constant of predicate functions appearing in $\varphi$. More specifically, consider the STL formula $\varphi$ consists of $\mathcal J$ subformula with predicate functions $\alpha_j$, $j\in \{1,\ldots, \mathcal J\}$, then,
\begin{equation*}
    L_1:= \max_{j\in \{1,\ldots, \mathcal J\}} L_{\alpha_j}, 
\end{equation*}
where $L_{\alpha_j}$ is the Lipschitz constant of $\alpha_j$ for $j\in\{1,\ldots, \mathcal J\}$. Constant $L_2$ is the maximum Lipschitz constant of $z_k$, the state at time $k\in\{1,2,\ldots,N\}$, with respect to $\vect w$, which is bounded by
    \begin{equation}\label{eq:lip:the}
        L_2 = \sqrt{\sum_{i=0}^{N-1}\|\bar A^i
        \|^2},
    \end{equation}
    where $\|\bar A^i
        \|$ is the induced 2-norm of the matrix $\bar A^i$, and $N$ is the length of the sequence $\vect w$.
\end{Proposition}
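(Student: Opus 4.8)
The plan is to separate the argument into two independent ingredients: a quantitative bound on how strongly the augmented state $z_k$ depends on the disturbance $\vect w$, and a structural argument that propagates Lipschitz constants through the recursive definition of the robustness function by means of Corollary~\ref{col:lip}. Fixing $\vect u$, $\vect\theta$ and $z_0$, I treat $z_k$ as a function of $\vect w$ alone, bound the Lipschitz constant of this linear map by $L_2$, and then compose with the predicate functions and the $\min/\max/$negation structure of $\varrho_0^\varphi$.

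First I would write the augmented dynamics~\eqref{eq:dyn:ag} in the explicit convolution form analogous to~\eqref{eq:traj1}. The dependence of $z_k$ on $\vect w$ enters only through the terms $\sum_{l=0}^{k-1}\bar A^{\,k-l-1}P_w w_l$, where $P_w:=\begin{bmatrix} I_{n_x}\\ 0\end{bmatrix}\in\mathbb R^{n_z\times n_x}$ embeds the disturbance block $w_l$ into the stacked noise vector of~\eqref{eq:dyn:ag}. Hence $\vect w\mapsto z_k$ is affine with Jacobian $M_k:=[\bar A^{\,k-1}P_w,\ \ldots,\ \bar A^{\,0}P_w,\ 0,\ \ldots,\ 0]$, and its Lipschitz constant in the $2$-norm equals the induced norm $\|M_k\|$. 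Using $\|P_w\|=1$ and submultiplicativity, together with the Cauchy--Schwarz bound $\|[M_0,\ldots,M_m]\|\le\big(\sum_j\|M_j\|^2\big)^{1/2}$ valid for any block-row matrix, I obtain
\[
\|M_k\|\le\Big(\sum_{l=0}^{k-1}\|\bar A^{\,k-l-1}\|^2\Big)^{1/2}=\Big(\sum_{i=0}^{k-1}\|\bar A^{\,i}\|^2\Big)^{1/2}\le L_2,
\]
for every $k\in\{1,\ldots,N\}$, which establishes $L_2$ as a uniform Lipschitz constant of $\vect w\mapsto z_k$ and identifies the maximizing index as $k=N$.

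Next, by Assumption~\ref{Assum:predicate} each predicate function $\alpha_j$ is Lipschitz with constant $L_{\alpha_j}$, so the composition $\vect w\mapsto\alpha_j(z_k)$ is Lipschitz with constant at most $L_{\alpha_j}L_2\le L_1L_2$. The robustness function $\varrho_0^\varphi$ is assembled from these leaf functions $\alpha_j(z_{k+k'})$ --- evaluated at the shifted times dictated by the temporal operators --- through repeated application of $\min$, $\max$ and negation. Negation merely flips the sign and so preserves the Lipschitz constant, while Corollary~\ref{col:lip} guarantees that any nested combination of $\min$ and $\max$ has Lipschitz constant equal to the maximum of those of its arguments. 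Applying this recursively over the parse tree of $\varphi$ yields the Lipschitz constant $\max_j L_{\alpha_j}L_2=L_1L_2=L_\varphi$ for $\varrho_0^\varphi$ with respect to $\vect w$.

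The main obstacle is the first ingredient: correctly reading off the Jacobian of $\vect w\mapsto z_k$ from the augmented model --- in particular isolating the disturbance block $w_l$ inside the stacked noise vector via $P_w$ --- and then bounding the spectral norm of the resulting block-row matrix. Using the Cauchy--Schwarz block-row inequality rather than a crude triangle-inequality sum is precisely what produces the expression $\sqrt{\sum_{i=0}^{N-1}\|\bar A^i\|^2}$ of~\eqref{eq:lip:the}; getting this bound right and checking that the supremum over $k$ is attained at $k=N$ is where the care is required. The propagation through the STL syntax is then routine given Corollary~\ref{col:lip}.
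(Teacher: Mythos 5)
Your proposal is correct and follows essentially the same route as the paper's proof: it bounds the Lipschitz constant of $\vect w\mapsto z_k$ via the explicit linear solution map using the triangle inequality, submultiplicativity, and Cauchy--Schwarz (yielding exactly \eqref{eq:lip:the}), and then propagates constants through the parse tree of $\varphi$ by composition and Corollary~\ref{col:lip}. Phrasing the first step in terms of the operator norm of the block-row Jacobian $M_k$ (with the embedding $P_w$, $\|P_w\|=1$) is just a repackaging of the paper's direct bound on $\|z_k\|$ in terms of $\|\vect w\|$, so no substantive difference arises.
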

\begin{proof}
By Assumption~\ref{Assum:predicate}, Corollary~\ref{col:lip} and the system trajectories in~\eqref{eq:dyn:ag}, the mapping $\vect w\rightarrow \varrho_0^{\varphi}$ is Lipschitz continuous for all $\vect u\in \mathbb{U}$ and all $\vect\theta\in \mathbb{T}$.  
To obtain the Lipschitz constant, it can be shown that for the composition of functions, the Lipschitz constant is the multiplication of Lipschitz constants of all functions. 
Therefore, using Corollary~\ref{col:lip} and its extension to multiple $\min$ and $\max$ operators, it is straightforward to show that the robustness function is a Lipschitz function with constant $L_{\varphi}=L_1L_2$, where $L_1$ is the maximum Lipschitz constant of predicate functions appearing in $\varphi$, and $L_2$ is the maximum Lipschitz constant of system trajectory with respect to $\vect w$ over all times in $1\leq k\leq N$. To find the Lipschitz constant of the system trajectory with respect to $\vect w$ for a given time $k$ and for the linear system given in~\eqref{eq:discsys1} and~\eqref{eq:traj2}, we use~\eqref{eq:dyn:ag} with setting $z_0=0$, $\theta^i_j=0$ and $u_j=0$ for all $1\leq i\leq n$ and all $0\leq j\leq k-1$. Then we have $z_{k} = \sum_{j=0}^{k-1}\bar A^{k-j-1} \bar w_j$, where $\bar w_j=[w^\intercal_j, \, \vect 0^\intercal_{n_z-n_x}]^\intercal$ with the following upper bound on the norm:
\begin{align*}
\|z_k\| & =\left\|\sum_{j=0}^{k-1}\bar A^{k-j-1} w_j\right\|\leq \sum_{j=0}^{k-1}\|\bar A^{k-j-1} \bar w_j\|\\
& \leq\!
\sum_{j=0}^{k-1}\!\|\bar A^{k-j-1}\| \|\bar w_j\|\leq\!
\sqrt{\sum_{j=0}^{k-1}\!\|\bar A^{k-j-1}\|^2}
\sqrt{\sum_{j=0}^{k-1}\|\bar w_j\|^2} \\
&\le   \sqrt{\sum_{j=0}^{k-1}\|\bar A^j\|^2} \,\,\|\vect w\|\le L_2 \|\vect w\|,
\end{align*}
where the first and second inequalities use the triangle and Cauchy–Schwarz inequalities, respectively. The obtained bound is valid for all $1\leq k\leq N$, concluding the proof. \end{proof}

In the following, we use Assumption~\ref{ass:concentration} for the Lipschitz continuous function $\varrho_0^{\varphi}(\vect u,\vect w,\vect \theta)$ (cf. Proposition~\ref{prop:lip}) and provide an under-approximation for CCP~\eqref{eq:CCP}.

\begin{Theorem}\label{the:EP} \textbf{(CoM theory)}
Under Assumptions~\ref{ass:concentration} and \ref{Assum:predicate}, the feasible domain of CCP~\eqref{eq:CCP} includes the feasible domain of the following ECP:
\begin{subequations}\label{eq:EP}
\begin{align}
     \min_{\vect u \in \mathbb{U}}\quad & J(\vect u),\\
    \mathrm{s.t.}\quad & \mathbb{E}\left[\varrho_0^{\varphi}(\vect u,\vect w,\vect\theta)\right]-L_\varphi h^{-1} (\varepsilon) \geq r_0,\quad \forall \vect\theta\in \mathbb{T},\label{eq:EP:const}
\end{align}
\end{subequations}
where $h$ is given by the class of distribution according to Assumption~\ref{ass:concentration}, and $L_\varphi$ is the Lipschitz constant of $\varrho_0^\varphi(\vect u,\vect w,\vect\theta)$ with respect to $\vect w$.
\end{Theorem}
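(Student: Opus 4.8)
The plan is to prove the stated set inclusion by showing its pointwise content: every control sequence $\vect u$ feasible for the ECP~\eqref{eq:EP} is also feasible for the CCP~\eqref{eq:CCP}. The engine is the CoM inequality of Assumption~\ref{ass:concentration}, which applies to Lipschitz-$1$ functions, combined with the explicit Lipschitz constant $L_\varphi$ furnished by Proposition~\ref{prop:lip}. Fix an arbitrary $\vect u\in\mathbb U$ and $\vect\theta\in\mathbb T$ throughout, noting that the whole argument will be uniform in $\vect\theta$, which is exactly what the ``$\forall\vect\theta\in\mathbb T$'' in both programs requires.

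First I would normalize to match the hypothesis of Assumption~\ref{ass:concentration}. Assuming $L_\varphi>0$ (if $L_\varphi=0$ the robustness is $\vect w$-independent and the claim is immediate), Proposition~\ref{prop:lip} makes $\vect w\mapsto\varrho_0^{\varphi}(\vect u,\vect w,\vect\theta)$ Lipschitz with constant $L_\varphi$, so $f(\vect w):=\varrho_0^{\varphi}(\vect u,\vect w,\vect\theta)/L_\varphi$ has Lipschitz constant $1$. Applying~\eqref{eq:concentration} to $f$ and rescaling by $L_\varphi$ gives, for every $t\ge 0$,
\[
P\bigl\{\,|\varrho_0^{\varphi}(\vect u,\vect w,\vect\theta)-\mathbb E[\varrho_0^{\varphi}(\vect u,\vect w,\vect\theta)]|\le L_\varphi t\,\bigr\}\ge 1-h(t).
\]
Second, I would weaken this two-sided concentration bound into a one-sided lower-tail bound: the event above is contained in $\{\varrho_0^{\varphi}\ge \mathbb E[\varrho_0^{\varphi}]-L_\varphi t\}$, so monotonicity of $P$ yields
\[
P\bigl\{\varrho_0^{\varphi}(\vect u,\vect w,\vect\theta)\ge \mathbb E[\varrho_0^{\varphi}(\vect u,\vect w,\vect\theta)]-L_\varphi t\bigr\}\ge 1-h(t).
\]
Choosing $t=h^{-1}(\varepsilon)$ and using that $\vect u$ satisfies the ECP constraint~\eqref{eq:EP:const}, i.e.\ $\mathbb E[\varrho_0^{\varphi}]-L_\varphi h^{-1}(\varepsilon)\ge r_0$, the threshold on the right is at least $r_0$, so $\{\varrho_0^{\varphi}\ge \mathbb E[\varrho_0^{\varphi}]-L_\varphi h^{-1}(\varepsilon)\}\subseteq\{\varrho_0^{\varphi}\ge r_0\}$ and hence $P\{\varrho_0^{\varphi}\ge r_0\}\ge 1-h(h^{-1}(\varepsilon))$.

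The one step that needs care — and which I expect to be the main obstacle — is the use of $h^{-1}$. Assumption~\ref{ass:concentration} posits only that $h$ is monotonically decreasing, not continuous or strictly decreasing, so $h^{-1}$ must be read as a generalized inverse, e.g.\ $h^{-1}(\varepsilon):=\inf\{t\ge 0:\ h(t)\le\varepsilon\}$, and for the chain to close at $1-\varepsilon$ I would need $h(h^{-1}(\varepsilon))\le\varepsilon$. A clean way to secure this is to invoke the paper's own observation that $h$ may be replaced by any monotonically decreasing $\bar h\ge h$ without invalidating~\eqref{eq:concentration}: taking $\bar h$ continuous and strictly decreasing on its support makes $\bar h^{-1}$ a genuine inverse with $\bar h(\bar h^{-1}(\varepsilon))=\varepsilon$, at the mild cost of a slightly larger, more conservative constant. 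One also needs $\varepsilon$ to lie in the reachable range of $h$ so the infimum is finite — an admissibility condition linking the tolerance $\varepsilon$ to the concentration profile $h$.

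Combining the pieces gives $P\{\varrho_0^{\varphi}(\vect u,\vect w,\vect\theta)\ge r_0\}\ge 1-\varepsilon$, which is precisely the CCP constraint; since $\vect u$ and $\vect\theta$ were arbitrary and the reasoning did not depend on $\vect\theta$, this holds for all $\vect\theta\in\mathbb T$, so the ECP feasible set is contained in the CCP feasible set. Aside from the generalized-inverse subtlety, the entire argument reduces to one normalization, the containment of events, and a single substitution $t=h^{-1}(\varepsilon)$; the real content of the theorem is the interplay between the Lipschitz constant of Proposition~\ref{prop:lip} and the tail profile $h$.
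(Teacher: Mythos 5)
Your proof is correct and takes essentially the same route as the paper's: normalize $\varrho_0^{\varphi}$ by $L_\varphi$ to obtain a Lipschitz-$1$ function, apply the CoM inequality of Assumption~\ref{ass:concentration} at $t=h^{-1}(\varepsilon)$, and combine the resulting (one-sided) tail bound with the ECP constraint~\eqref{eq:EP:const} to conclude $P\{\varrho_0^{\varphi}(\vect u,\vect w,\vect\theta)\ge r_0\}\ge 1-\varepsilon$ uniformly in $\vect\theta$. The only additions are your explicit handling of the $L_\varphi=0$ edge case and of the generalized-inverse subtlety $h(h^{-1}(\varepsilon))\le\varepsilon$ (via replacing $h$ by a continuous strictly decreasing $\bar h\ge h$), points the paper's proof passes over by implicitly treating $h$ as invertible.
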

\begin{proof}
From Proposition~\ref{prop:lip}, the function $\varrho_0^{\varphi}(\vect u,\vect w, \vect \theta)/L_\varphi$ is Lipschitz continuous with respect to $\vect w$ with Lipschitz constant of one. Under Assumption~\ref{ass:concentration} with $t = h^{-1}(\varepsilon)$, we have
\begin{equation*}\label{eq:concen:phi}	 P\!\left\{\!\left|\frac{\varrho_0^{\varphi}(\vect u,\vect w, \vect \theta)}{L_\varphi}\!-\!\mathbb E_P\!\left[\frac{\varrho_0^{\varphi}(\vect u,\vect w, \vect \theta)}{L_\varphi}\right]\!\right |\!\le\! t\right\}\ge 1-h(t)\!=\!1-\varepsilon,
	\end{equation*}
for all $t\ge 0$, $\vect u\in\mathbb U$ and $\vect\theta\in \mathbb{T}$. Suppose that constraint~\eqref{eq:EP:const} holds, i.e., 
\begin{equation*}
    \mathbb E_{P} \left[\frac{\varrho_0^{\varphi}(\vect u,\vect w, \vect \theta)}{L_\varphi}\right]\geq h^{-1}(\varepsilon)+ \frac{r_0}{L_\varphi}=t+\frac{r_0}{L_\varphi}.
\end{equation*}
Then, with probability at least $(1-\varepsilon)$, we have  
\begin{align*}
\frac{\varrho_0^{\varphi}(\vect u,\vect w, \vect \theta)}{L_\varphi}\ge \mathbb E_P\left[\frac{\varrho_0^{\varphi}(\vect u,\vect w, \vect \theta)}{L_\varphi}\right]-t\ge t+\frac{r_0}{L_\varphi} -t = \frac{r_0}{L_\varphi}.
\end{align*}
This implies that $\varrho_0^{\varphi}(\vect u,\vect w, \vect \theta)\ge r_0$ with probability at least $(1-\varepsilon)$.
\end{proof}

The CCP in~\eqref{eq:CCP} has been reformulated into the ECP in~\eqref{eq:EP} by incorporating constraints on the expectation with a tightening function $L_\varphi h^{-1}$. This reformulation can be addressed by leveraging knowledge of the distribution family or by using an upper bound on the function $h$, related to the CoM property, along with an estimation of the expectation. This approach often simplifies the problem compared to solving the original CCP directly.

In the following subsection, we introduce an alternative method to address chance constraints, utilizing the concept of conditional value at risk (CVaR). CVaR offers an effective way to approximate a given CCP by providing a corresponding ECP.

\subsection{Conditional value at risk (CVaR)}
In order to tackle the chance constraint in CCP~\eqref{eq:CCP}, a natural measure of risk is value-at-risk (VaR). For a random variable $R$ and confidence level $(1-\varepsilon)$, $\mathrm{VaR}_{1-\varepsilon}$ is defined as
\begin{equation*}
       \mathrm{VaR}_{1-\varepsilon}(R):=\inf \{\eta \in \mathbb{R} \,|\,P(R\leq\eta)\geq 1-\varepsilon\},
\end{equation*}
where $P$ is the probability measure of $R$. In fact, $\mathrm{VaR}_{1-\varepsilon}$ represents the worst-case loss with probability at least $(1-\varepsilon)$. Then, one can show that
\begin{equation}\label{eq:VaR}
     \mathrm{VaR}_{1-\varepsilon}(R)\leq 0 \Leftrightarrow  P(R\leq 0 )\geq 1-\varepsilon.
\end{equation}
Unfortunately, $\mathrm{VaR}_{1-\varepsilon}(\cdot)$ is typically a non-convex function, making optimizations with VaR challenging, especially in problems involving high-dimensional or non-normal distributions.

An alternative measure of risk is CVaR. CVaR with a confidence level of $(1-\varepsilon)$, denoted as $\mathrm{CVaR}_{1-\varepsilon}$, measures the expected loss in the $\varepsilon$-tail given that the threshold $\mathrm{VaR}_{1-\varepsilon}$ has been crossed, i.e., 
\begin{equation*}
    \mathrm{CVaR}_{1-\varepsilon}(R):=\mathbb{E}\left [R\, |\, \mathrm{VaR}_{1-\varepsilon}(R)\leq R\right ].
\end{equation*}
In~\cite{rockafellar2000optimization}, the following optimization representation was proposed for CVaR:
\begin{equation}\label{eq:CVaR:opt}
       \mathrm{CVaR}_{1-\varepsilon}(R)=\min_{\eta\in \mathbb{R}}\mathbb{E}\left[\eta+\frac{1}{\varepsilon}(R-\eta)_+ \right],
\end{equation}
where $(\cdot)_+=\max\{0,\cdot\}$. Indeed,  $\mathrm{CVaR}$ is a \textit{coherent} risk measure that satisfies conditions such as convexity and monotonicity~\cite{rockafellar2000optimization}. 
Risk management with CVaR functions can be performed quite efficiently. CVaR can be formulated with convex and linear programming methods, while VaR is comparably more complicated to optimize. Detailed benefits and concepts of CVaR can be found in, e.g., \cite{rockafellar2002conditional}. It can be been shown that~\cite{rockafellar2000optimization}
\begin{equation*}
      \mathrm{VaR}_{1-\varepsilon}(R) \leq \mathrm{CVaR}_{1-\varepsilon}(R).
\end{equation*}
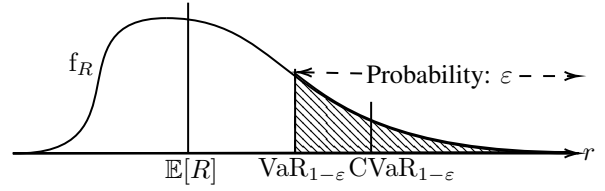
\begin{figure}
\tikzset{every picture/.style={line width=0.75pt}}


\begin{tikzpicture}[x=0.75pt,y=0.75pt,yscale=-0.65,xscale=0.65]

\draw  [dash pattern={on 4.5pt off 4.5pt}]  (341,211) -- (341,140) ;
\draw [pattern=north west lines, pattern color=black] (341,211) -- (341,150)--(341,150) .. controls (372,174) and (419,211) .. (555,211)--(122,211.99) -- (559,211.99);

\draw    (122,211.99) -- (559,211.99) ;
\draw [shift={(561,212)}, rotate = 180] [color={rgb, 255:red, 0; green, 0; blue, 0 }  ][line width=0.75]    (10.93,-3.29) .. controls (6.95,-1.4) and (3.31,-0.3) .. (0,0) .. controls (3.31,0.3) and (6.95,1.4) .. (10.93,3.29)   ;
\draw    (241,107) .. controls (345,107) and (326,211.99) .. (555,211.99) ;
\draw    (122,211.99) .. controls (228,214) and (151,107) .. (241,107) ;
\draw   (258,211.99) -- (258,95) ;
\draw    (400,211.99) -- (400,172) ;
\draw  [dash pattern={on 4.5pt off 4.5pt}]  (393,150) -- (345,149.04) ;
\draw [shift={(343,149)}, rotate = 1.15] [color={rgb, 255:red, 0; green, 0; blue, 0 }  ][line width=0.75]    (10.93,-3.29) .. controls (6.95,-1.4) and (3.31,-0.3) .. (0,0) .. controls (3.31,0.3) and (6.95,1.4) .. (10.93,3.29)   ;
\draw  [dash pattern={on 4.5pt off 4.5pt}]  (515,152) -- (559,152) ;
\draw [shift={(561,152)}, rotate = 180] [color={rgb, 255:red, 0; green, 0; blue, 0 }  ][line width=0.75]    (10.93,-3.29) .. controls (6.95,-1.4) and (3.31,-0.3) .. (0,0) .. controls (3.31,0.3) and (6.95,1.4) .. (10.93,3.29)   ;

\draw (562,205.9) node [anchor=north west][inner sep=0.75pt]    {$r$};
\draw (394,141.4) node [anchor=north west][inner sep=0.75pt]    {Probability: $\varepsilon$};
\draw (310,213.4) node [anchor=north west][inner sep=0.75pt]    {$\mathrm{VaR}_{1-\varepsilon}$};
\draw (380,213.4) node [anchor=north west][inner sep=0.75pt]    {$\mathrm{CVaR}_{1-\varepsilon}$};
\draw (238,213.4) node [anchor=north west][inner sep=0.75pt]    {$\mathbb{E}[R]$};
\draw (163,131.4) node [anchor=north west][inner sep=0.75pt]    {$\mathrm{f}_R$};
\end{tikzpicture}
\caption{Illustration of the different risk measures $\mathbb E[R]$, $\mathrm{VaR}_{1-\varepsilon}$ and $\mathrm{CVaR}_{1-\varepsilon}$ for random variable $R$ with probability density function $\mathrm{f}_R$.}
\label{fig:CVaR}
\end{figure}
Therefore, the following inequality: 
\begin{equation*}
    \mathrm{CVaR}_{1-\varepsilon}(R)\leq 0,
\end{equation*}
provides an approximation to the chance constraint in~\eqref{eq:VaR}.
Figure~\ref{fig:CVaR} compares the different risk measures including VaR, CVaR and $\mathbb{E}$.

Using the concept of CVaR, CCP~\eqref{eq:CCP} can be under-approximated as follows:
\begin{subequations}\label{eq:CVAR}
\begin{align}
    \min_{\vect u \in \mathbb{U}}\quad & J(\vect u),\\\mathrm{s.t.}\quad & \mathrm{CVaR}_{1-\varepsilon}\left(-\varrho_0^{\varphi}(\vect u,\vect w,\vect\theta)\right)\leq -r_0,\quad \forall \vect\theta\in \mathbb{T},   
\end{align}    
\end{subequations}
Using~\eqref{eq:CVaR:opt}, optimization \eqref{eq:CVAR} can be written as the following ECP:
\begin{subequations}\label{eq:CVAR1}
\begin{align}
    \min_{\vect u \in \mathbb{U},\eta\in\mathbb{R}}\,\, & J(\vect u),\\\mathrm{s.t.}\,\, & 
    \frac{1}{\varepsilon}\mathbb{E}\left[(\varrho_0^{\varphi}(\vect u,\vect w,\vect\theta)-\eta)_- \right]+\eta \geq r_0, \,\, \forall \vect\theta\in \mathbb{T},   
\end{align}    
\end{subequations}
where $(\cdot)_-=\min\{0,\cdot\}$. Using~\eqref{eq:VaR}-\eqref{eq:CVAR1}, it can be seen that any feasible solution of \eqref{eq:CVAR1} is a feasible solution for \eqref{eq:CCP}.

Therefore, both \eqref{eq:EP} and \eqref{eq:CVAR1} provide an approximation for  CCP~\eqref{eq:CCP} based on the CoM and CVaR approaches, respectively. Both approaches lead to an optimization problem that involves an expectation constraint. However, to compute expectations one requires knowledge of the exact distribution $P$. For instance, sample averaging is an often employed technique to approximate expectations numerically. Nonetheless, this approach necessitates a sufficiently large sample size to ensure the accuracy of the empirical expectation compared to the exact one.

In the following, our aim is to solve ECP~\eqref{eq:EP} and \eqref{eq:CVAR1} for an unknown distribution $P$ using finite observed realizations of the disturbance. We then provide finite-sample guarantee on the solution of the resulting data-driven optimization problem.
\section{Distributionally robust program}\label{sec:ecp}
In this section, we provide a distributionally robust optimization (DRO) to solve the ECPs in \eqref{eq:EP} and \eqref{eq:CVAR1} obtained from the CoM and CVaR, respectively.
\subsection{DRO for CoM}
In the following, we aim to solve ECP~\eqref{eq:EP} for an unknown distribution $P$  with respect to the worst-case distribution in an ambiguity set using a Wasserstein distributionally robust approach and provide a finite sample guarantee with respect to the exact ECP~\eqref{eq:EP}. More specifically, the distributionally robust expression of \eqref{eq:EP} can be written as the  following distributionally robust program~(DRP):
\begin{subequations}\label{eq:SP2}
    \begin{align}
        \min_{\vect u\in\mathbb U}\, &J(\vect u),\\
\mathrm{s.t.}\, &\inf_{Q\in\mathbb{Q}}\!\mathbb E_Q\! \left[\varrho_0^{\varphi}(\vect u,\vect w,\vect\theta)\right]\!-\!L_\varphi h^{-1}(\varepsilon)\!\geq \!r_0,\,\forall\vect\theta\!\in \!\mathbb{T},
    \end{align}
\end{subequations}
where $\mathbb{Q}$ is an ambiguity set, defining a set of all distributions around an empirical distribution $\hat Q$ that contains the true distribution $P$ with high confidence. In this paper, we use the Wasserstein metric $W: Q(\mathbb{W})\times  Q(\mathbb{W}) \rightarrow \mathbb{R}_{\geq 0}$ to define the ambiguity ball $\mathbb{Q}$ as
\begin{equation}\label{eq:amb:set}
\mathbb{Q}:=\{Q\in Q(\mathbb{W})\,|\, W(Q,\hat Q)\leq r\},    
\end{equation}
where $Q(\mathbb{W})$ denotes the set of Borel probability measures on the support $\mathbb{W}$ and $r\geq 0$ is the radius of the Wasserstein ball. For any two distributions $Q^1,Q^2\in Q(\mathbb{W})$, the Wasserstein metric $W$ is defined as follows:
\begin{align}\label{eq:WM}
    &W(Q^1,Q^2):= \min_{\kappa\in Q(\mathbb{W}^2)} \Big\{\int_{\mathbb{W}^2} \|  \vect w_1- \vect w_2\|\mathrm{d}\kappa( \vect w_1, \vect w_2)\,\nonumber\\&\qquad\qquad\qquad\qquad\quad\qquad\Big|\, \Pi^j\kappa= Q^j, j=1,2 \Big\},
\end{align}
 where $\Pi^j\kappa$ denotes the $j^{\mathrm{th}}$ marginal of the joint distribution $\kappa$ for $j=1,2$. Note that, the sampling-based reformulation in~\eqref{eq:SP2} stems from the need to make decisions under uncertainty about the true distribution $P$ that governs the random variable $\vect w$. Since the true distribution $P$ is unknown, we rely on a finite number of i.i.d. samples $\{\vect w^i\}_{i=1}^M$ to infer information about $P$. These samples provide an empirical approximation $\hat{Q}$ that can be constructed as follows: 
\begin{equation}\label{eq:hatQ}
    \hat {Q}=\frac{1}{M} \sum_{i=1}^{M} \delta_{\vect w^i},
\end{equation}
where $\delta_{\vect w^i}$ is the Dirac measure concentrated at $\vect w^i$. Consider that, because $\hat{Q}$ is constructed from a limited sample set, it may not perfectly capture the true distribution $P$. To account for the discrepancy between the empirical distribution $\hat Q$ and the true distribution $P$, we introduced the ambiguity set $\mathbb{Q}$ in~\eqref{eq:amb:set}, which includes all distributions that are close to the empirical distribution $\hat{Q}$ within a Wasserstein ball of radius $r$. The parameter $r$ reflects the confidence level: a larger $r$ increases the probability that $\mathbb{Q}$ contains the true distribution $P$, thereby providing a more robust solution to the optimization problem.

Although the DRP~\eqref{eq:SP2} overcomes knowledge of the exact distribution, it is not straightforward to solve since it contains decision variables in the continuous probability measure space. It is desirable to derive a (possibly approximated) solution of~\eqref{eq:SP2} based on the finite samples $\vect w^i$, ensuring both feasibility and performance guarantees.

Inspired by~\cite{mohajerin2018data}, the following theorem offers an equivalent data-driven optimization to the DRP~\eqref{eq:SP2} for the Wasserstein ambiguity set, defined in~\eqref{eq:amb:set}, and with a finite number of samples that eliminates the decision variables in the probability measure space. We can guarantee that any feasible solution obtained from the proposed optimization is a feasible solution to the DRP~\eqref{eq:SP2} and, with a predefined confidence level, is a feasible solution to ECP~\eqref{eq:EP}.

\begin{Theorem}\textbf{(Data-driven DRP for the CoM approach)}\label{thm:1}
Under Assumptions~\ref{ass:light-tailed}-\ref{Assum:predicate} The following optimization:
\begin{subequations}\label{eq:trac0}
\begin{flalign}
&\inf_{\vect u\in\mathbb U, \lambda,\vect y}\,\, J(\vect u),\\
      &\,\,\,\mathrm{s.t.}\,\, \inf_{\vect w\in \mathbb{W},\, \vect \theta\in\mathbb{T}} \left[\varrho_0^\varphi(\vect u,\vect w,\vect\theta)+\lambda \|  \vect w- \vect w^i\|\right]\geq y^i, \nonumber\\&\qquad\,\,\quad\quad\qquad\qquad\qquad\qquad\qquad \forall i\in\{1,\ldots,M\},\label{eq:trac0:cons2} \\&\qquad
 \frac{1}{M} \sum_{i=1}^{M} y^i-\lambda r-L_\varphi h^{-1}(\varepsilon)\geq r_0, \,\,\lambda \geq 0,
\end{flalign} 
\end{subequations}
is equivalent to \eqref{eq:SP2}, with the Wasserstein ambiguity set $\mathbb{Q}$ defined in \eqref{eq:amb:set}-\eqref{eq:hatQ}. Moreover, by choosing the Wasserstein radius $r$, as
\begin{equation}\label{eq:radius}
    r\geq \left \{ \begin{matrix}
\left(\frac{\log(c_1\beta^{-1})}{c_2M}\right)^{\frac{1}{\max\{Nn_x,2\}}} & \mathrm{if}\quad M\geq \frac{\log(c_1\beta^{-1})}{c_2},
\\
\left(\frac{\log(c_1\beta^{-1})}{c_2M}\right)^{\frac{1}{a}} & \mathrm{else},
\end{matrix} \right.
\end{equation}
any feasible solution to~\eqref{eq:trac0} is a feasible solution to~\eqref{eq:EP} with probability at least $(1-\beta)$, for a user-specified confidence level $\beta\in(0, \,1)$, where $c_1, c_2$ are positive constants depending on $a$ and $\mathcal{C}$ (see Assumption~\ref{ass:light-tailed}) and the disturbance dimension~\cite{fournier2015rate}.
\end{Theorem}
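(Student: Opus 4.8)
The plan is to prove the statement in two stages: first establish the correspondence between the tractable program~\eqref{eq:trac0} and the distributionally robust program~\eqref{eq:SP2} via Wasserstein strong duality, and then invoke a measure-concentration bound to turn feasibility of~\eqref{eq:SP2} into feasibility of the exact ECP~\eqref{eq:EP} with the stated confidence. The structural fact I would exploit throughout is that, by Proposition~\ref{prop:lip}, the map $\vect w\mapsto\varrho_0^{\varphi}(\vect u,\vect w,\vect\theta)$ is Lipschitz with constant $L_\varphi$ \emph{uniformly} in $\vect\theta\in\mathbb T$ and $\vect u\in\mathbb U$.

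For the reformulation I would start from the worst-case expectation $\inf_{Q\in\mathbb Q}\mathbb E_Q[\varrho_0^{\varphi}(\vect u,\vect w,\vect\theta)]$, rewrite it as $-\sup_{Q\in\mathbb Q}\mathbb E_Q[-\varrho_0^{\varphi}]$, and apply the Wasserstein strong-duality result of~\cite{mohajerin2018data} to the upper semicontinuous integrand $-\varrho_0^{\varphi}$ over the ball~\eqref{eq:amb:set} around the empirical measure~\eqref{eq:hatQ}. For each fixed $\vect\theta$ this yields
\[
\inf_{Q\in\mathbb Q}\mathbb E_Q[\varrho_0^{\varphi}]=\sup_{\lambda\ge 0}\Big\{-\lambda r+\frac1M\sum_{i=1}^M\inf_{\vect w\in\mathbb W}\big[\varrho_0^{\varphi}+\lambda\|\vect w-\vect w^i\|\big]\Big\}.
\]
Here the Lipschitz property of Proposition~\ref{prop:lip} guarantees the inner infimum is finite (it degenerates to $-\infty$ for $\lambda<L_\varphi$), so the effective dual domain is $\lambda\ge 0$ with an attained maximizer. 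I would then encode the semi-infinite requirement ``$\forall\vect\theta\in\mathbb T$'' by folding the infimum over $\vect\theta$ into the inner minimization, which is legitimate because $\|\vect w-\vect w^i\|$ is independent of $\vect\theta$, and by taking the dual variable $\lambda$ common to all $\vect\theta$. Introducing epigraph variables $y^i\le\inf_{\vect w\in\mathbb W,\vect\theta\in\mathbb T}[\varrho_0^{\varphi}+\lambda\|\vect w-\vect w^i\|]$ and replacing the implicit $\sup_{\lambda\ge0}$ by the existence of a feasible $\lambda$ then reproduces exactly~\eqref{eq:trac0}. Since folding the worst-case $\vect\theta$ inside only tightens the constraint, any $\vect u$ feasible for~\eqref{eq:trac0} is feasible for~\eqref{eq:SP2}, which is the direction needed below.

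For the probabilistic guarantee I would combine the light-tail Assumption~\ref{ass:light-tailed} with the concentration estimate of~\cite{fournier2015rate}: there exist constants $c_1,c_2>0$ depending on $a$, $\mathcal C$, and the dimension $Nn_x$ of $\vect w$ such that $\mathbb P\{W(\hat Q,P)\ge r\}\le c_1\exp(-c_2Mr^{\max\{Nn_x,2\}})$ for $r\le 1$ and $\le c_1\exp(-c_2Mr^{a})$ for $r>1$. Requiring the right-hand side to be at most $\beta$ and solving for $r$ gives $r\ge(\log(c_1\beta^{-1})/(c_2M))^{1/\max\{Nn_x,2\}}$ in the small-radius regime and $r\ge(\log(c_1\beta^{-1})/(c_2M))^{1/a}$ otherwise; the crossover $r=1$ corresponds precisely to $M=\log(c_1\beta^{-1})/c_2$, which is the case split in~\eqref{eq:radius}. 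Hence with this radius, $P\in\mathbb Q$ with probability at least $1-\beta$ over the draw of $\{\vect w^i\}_{i=1}^M$.

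To close, on the event $\{P\in\mathbb Q\}$ we have $\inf_{Q\in\mathbb Q}\mathbb E_Q[\varrho_0^{\varphi}(\vect u,\vect w,\vect\theta)]\le\mathbb E_P[\varrho_0^{\varphi}(\vect u,\vect w,\vect\theta)]$ for every $\vect\theta$, so any $\vect u$ feasible for~\eqref{eq:trac0} (hence for~\eqref{eq:SP2}) satisfies $\mathbb E_P[\varrho_0^{\varphi}]-L_\varphi h^{-1}(\varepsilon)\ge r_0$ for all $\vect\theta\in\mathbb T$, i.e.\ it is feasible for~\eqref{eq:EP}, with probability at least $1-\beta$. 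I expect the main obstacle to be the duality step: verifying the regularity hypotheses of~\cite{mohajerin2018data} for the worst-case expectation and, in particular, justifying that the adversarial $\vect\theta$ may be interchanged with the expectation and folded into a single inner minimization. Establishing full equivalence (rather than the one-sided implication that the guarantee requires) rests on the validity of the $\inf_{\vect\theta}$--$\inf_Q$ interchange, for which the uniform Lipschitz bound of Proposition~\ref{prop:lip} and the compactness of $\mathbb T$ are the essential ingredients.
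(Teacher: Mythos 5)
Your proposal is correct and takes essentially the same route as the paper: Wasserstein strong duality to obtain the dual form with multiplier $\lambda$ (you invoke \cite{mohajerin2018data} directly, while the paper derives it via the coupling decomposition and a Lagrangian argument, citing \cite{gao2023distributionally} for strong duality and verifying the required growth condition through the Lipschitz bound of Proposition~\ref{prop:lip}), followed by epigraph variables $y^i$, folding of the adversarial $\vect\theta$ into the inner infimum with a single common $\lambda$, and the concentration bound of \cite{fournier2015rate} to select the radius~\eqref{eq:radius} and transfer feasibility to~\eqref{eq:EP}. The only substantive difference is that you explicitly acknowledge that the $\vect\theta$-folding yields just the one-sided inclusion (feasibility of~\eqref{eq:trac0} implies feasibility of~\eqref{eq:SP2}), which is indeed all the probabilistic guarantee requires; the paper's proof asserts full equivalence via its ``merging'' step but likewise never argues the converse direction (which would need a single $\lambda$ and a common worst-case $\vect\theta$ to work across all samples $i$), so on this point your treatment is, if anything, the more careful one.
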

\begin{proof}Using the definition of Wasserstein metric in \eqref{eq:WM}, $ \inf_{Q\in\mathbb{Q}}\mathbb E_Q \left[\varrho_0^\varphi(\vect u,\vect w,\vect\theta)\right]$ can be reformulated as 
\begin{align}\label{eq:opt:DR1}
   \inf_{Q\in\mathbb{Q}}\!&\mathbb E_Q\! \left[\varrho_0^\varphi(\vect u,\vect w,\vect\theta)\right]\!\!=\!\!
   \begin{cases}
       \!\inf_{Q,\kappa} \mathbb E_Q \left[\varrho_0^\varphi(\vect u,\vect w,\vect\theta)\right],\\
       \!\mathrm{s.t.} \int_{\mathbb{W}^2} \|  \vect w_1- \vect w_2\|\mathrm{d}\kappa( \vect w_1, \vect w_2)\leq r\nonumber\\\qquad \Pi^1\kappa= Q,\,\, \Pi^2\kappa= \hat Q
   \end{cases}
   \\&=
     \begin{cases}
       \inf_{Q_i} \frac{1}{M} \sum_{i=1}^{M} \mathbb E_{Q_i} \left[\varrho_0^\varphi(\vect u,\vect w,\vect\theta)\right]\\
       \mathrm{s.t.} \frac{1}{M} \sum_{i=1}^{M} \mathbb E_{Q_i} \left[\|  \vect w- \vect w^i\|\right]\leq r.
   \end{cases}
\end{align}
The second equality is a consequence of the law of total probability, which states that any joint distribution $\kappa$ of $\vect w_1$ and $\vect w_2$ can be created from the marginal distribution $\hat Q$ of $\vect w_2$ and the conditional distribution $Q_i$ of $\vect w_1$ given $\vect w_2=\vect w^i$, i.e., $\kappa= \frac{1}{M} \sum_{i=1}^{M} \delta_{ \vect w^i} \otimes Q_i$, where $\otimes$ is used for the product of two probability distributions.
Using the Lagrangian dual problem for the constrained optimization \eqref{eq:opt:DR1}, we obtain
\begin{align}\label{eq:Lag}
    &\inf_{Q\in\mathbb{Q}}\mathbb E_Q \left[\varrho_0^\varphi(\vect u,\vect w,\vect\theta)\right]=\inf_{Q_i}\sup_{\lambda \geq 0} \frac{1}{M} \sum_{i=1}^{M} \mathbb E_{Q_i} \left[\varrho_0^\varphi(\vect u,\vect w,\vect\theta)\right]+\nonumber\\ &\qquad\qquad\lambda \left(\frac{1}{M} \sum_{i=1}^{M} \mathbb E_{Q_i} \left[\|  \vect w- \vect w^i\|\right]-r\right)=\nonumber\\
& \sup_{\lambda\geq 0} \inf_{Q_i}-\lambda r+ \frac{1}{M} \sum_{i=1}^{M} \mathbb E_{Q_i} \left[\varrho_0^\varphi(\vect u,\vect w,\vect\theta)+\lambda \|  \vect w- \vect w^i\|\right]= \nonumber\\&
\sup_{\lambda \geq 0}-\lambda r+\frac{1}{M} \sum_{i=1}^{M} \inf_{\vect w\in \mathbb{W}} \left[\varrho_0^\varphi(\vect u,\vect w,\vect\theta)+\lambda\|  \vect w- \vect w^i\|\right],  
\end{align}
where $\lambda$ is the Lagrange multiplier. Note that the second equality follows from the strong duality that has been shown in \cite{gao2023distributionally}, and the third equality holds because $Q_i$ contains all the Dirac distributions supported on $\mathbb W$, i.e., one can verify that $\sup_{X\in \mathbb{X}} \mathbb{E}^{X}[f(x)]=\sup_x f(x)$, where $\mathbb{X}$ is the set of all distributions $X$ on the support of the random variable $x$.

It is important to note that the strong duality in~\eqref{eq:Lag}, with adjusted notations, has been shown in~\cite{gao2023distributionally} to hold under the following conditions: (1)~semicontinuity of $\varrho_0^\varphi(\vect u,\vect w,\vect\theta)$ and (2) either boundedness of the support set $\mathbb{W}$, or the existence of $\vect w_0\in \mathbb{W}$ such that
\begin{equation}\label{eq:growrate}
   \limsup\limits_{\|\vect w-\vect w_0\|\rightarrow \infty} \frac{\varrho_0^\varphi(\vect u,\vect w_0,\vect\theta)-\varrho_0^\varphi(\vect u,\vect w,\vect\theta)}{\|\vect w-\vect w_0\|}<\infty
\end{equation} 
holds for all $\vect u\in \mathbb U$ and $\vect w\in \mathbb W$. The continuity of the robustness function follows from Proposition~\ref{prop:lip}. Regarding condition~\eqref{eq:growrate}, we have:
\begin{align*}
   &\limsup\limits_{\|\vect w-\vect w_0\|\rightarrow \infty} \frac{\varrho_0^\varphi(\vect u,\vect w_0,\vect\theta)-\varrho_0^\varphi(\vect u,\vect w,\vect\theta)}{\|\vect w-\vect w_0\|}\leq \\ &\limsup\limits_{\|\vect w-\vect w_0\|\rightarrow \infty} \frac{|\varrho_0^\varphi(\vect u,\vect w_0,\vect\theta)-\varrho_0^\varphi(\vect u,\vect w,\vect\theta)|}{\|\vect w-\vect w_0\|}\leq
    \\ &\limsup\limits_{\|\vect w-\vect w_0\|\rightarrow \infty} \frac{L_\varphi \|\vect w-\vect w_0\|}{\|\vect w-\vect w_0\|}=L_\varphi<\infty.
\end{align*} 
Introducing a new auxiliary variable $y^i$, from \eqref{eq:Lag}, we can conclude
\begin{align}\label{eq:supinf1}
&\inf_{Q\in\mathbb{Q}}\mathbb E_Q \left[\varrho_0^\varphi(\vect u,\vect w,\vect\theta)\right]= \\ &
\begin{cases}\nonumber
    \sup_{\lambda,y^i} &-\lambda r+\frac{1}{M} \sum_{i=1}^{M} y^i,\\ \mathrm{s.t.}  &\!\! \inf_{\vect w\in \mathbb{W}}\!\! \left[\varrho_0^\varphi(\vect u,\vect w,\vect\theta)\!+\!\lambda\|  \vect w- \vect w^i\|\right]\!\geq\! y^i, \nonumber\\&\qquad\quad\qquad\qquad\qquad\qquad\qquad \forall i\in\{1,\ldots,M\},\\ &\lambda \geq 0,
\end{cases}
\end{align}
for all $\vect u\in \mathbb U$ and $\vect w\in \mathbb W$. Merging of the decision variables in \eqref{eq:supinf1} with the main optimization in~\eqref{eq:SP2} and the robust optimization for all $\vect\theta\in\mathbb{T}$ in the $\inf$ constraint~\eqref{eq:supinf1} results in~\eqref{eq:trac0}.
Regarding the connection to ECP~\eqref{eq:EP} with the exact distribution, \cite{fournier2015rate} has demonstrated that if the Wasserstein radius $r$ is selected according to~\eqref{eq:radius}, then the probability of containing the exact distribution by the Wasserstein ball $\mathbb Q$ is at least $(1-\beta)$, i.e., $
    {P}^M\left \{P\in \mathbb Q\right \}\geq 1-\beta$. Therefore, for every given $\vect u\in\mathbb U$ and $\vect\theta\in \mathbb{T}$,
    \begin{equation*}
    {P}^M\left \{\inf_{Q\in\mathbb{Q}}\mathbb E_Q \left[\varrho_0^\varphi(\vect u,\vect w,\vect\theta)\right]\leq \mathbb E_P \left[\varrho_0^\varphi(\vect u,\vect w,\vect\theta)\right]\right\}\geq 1-\beta,
\end{equation*}
where ${P}^M$ is the $M$-fold product of the probability measure $P$. This concludes that the feasible domain of \eqref{eq:SP2} (and \eqref{eq:trac0} from the first part of the theorem) is an inner approximation of~\eqref{eq:EP} with a probability of at least $(1-\beta)$.
\end{proof}

Note that the term \textit{data-driven}, as used in e.g.,  \cite{mohajerin2018data, chen2024data, hota2019data}, refers to optimization that utilizes a set of samples to construct the DRP. In the control literature, this term has also been used in contexts where the system matrices are unknown. However, this is beyond the scope of our paper, that assumes the system matrices are known in advance.

Moreover, it is worth mentioning that since $\inf_{\vect w\in \mathbb{W}}$ appears in \eqref{eq:trac0:cons2}, knowing the support $\mathbb{W}$ is necessary to establish the equivalency statement in Theorem~\ref{thm:1}. Alternatively, having access to an over-approximation of the support $\mathbb{W}$ ensures that~\eqref{eq:trac0} is an under-approximation of~\eqref{eq:SP2}.
This means that any feasible solution of~\eqref{eq:trac0} will also be a feasible solution to~\eqref{eq:SP2}. Therefore, it must be assumed that the support $\mathbb{W}$, or an over-approximation of that is available. Note that knowing the support (or an over-approximation of it) is a much milder requirement than knowing the exact distribution, which makes our results applicable to uncertainties with unknown distributions as long as the distribution class of the uncertainty satisfies the raised assumptions.

\subsection{Distributionally robust CVaR}
Analogous to the CoM approach, the distributionally robust expression of \eqref{eq:CVAR} can be written as follows:
\begin{align}\label{eq:CVAR,DRO}
    \min_{\vect u \in \mathbb{U}}\quad & J(\vect u),\\
    \mathrm{s.t.}\quad & \sup_{Q\in\mathbb{Q}}\mathrm{CVaR}^Q_{1-\varepsilon}\left(-\varrho_0^{\varphi}(\vect u,\vect w,\vect\theta)\right)\leq -r_0,\quad \forall \vect\theta\in \mathbb{T},\nonumber   
\end{align}    
where the superscript $Q$ in $\mathrm{CVaR}^Q_{1-\varepsilon}$ indicates the distribution under which the CVaR is calculated. The following theorem presents a data-driven optimization that approximates the feasible set of ~\eqref{eq:CVAR,DRO} and elucidates its connection to ECP~\eqref{eq:CVAR1}.

\begin{Theorem}\textbf{(Data-driven DRP for the CVaR approach)}\label{thm:2}
Under Assumptions~\ref{ass:light-tailed} and \ref{Assum:predicate}, any feasible solution to the following optimization:
\begin{subequations}\label{eq:trac1}
\begin{flalign}
&\inf_{\vect u\in\mathbb U, \lambda,\vect y,\eta}\,\, J(\vect u),\\
      &\,\,\,\mathrm{s.t.}\,\, \inf_{\vect w\in \mathbb{W},\vect\theta\in\mathbb{T}}\! \left[\varrho_0^{\varphi}(\vect u,\vect w,\vect\theta)\!+\!\lambda \|  \vect w- \vect w^i\|\right]\geq y^i\!, \nonumber\\&\qquad\,\,\,\quad\quad\qquad\qquad\qquad\qquad\qquad \forall i\in\{1,\ldots,M\},\label{eq:cons:trac1} \\&\quad\qquad
 0\geq y^i+\eta, \qquad\qquad\qquad\,\,\, \forall i\in\{1,\ldots,M\},
 \\&\quad\qquad
 \frac{1}{M}\! \sum_{i=1}^{M} y^i-\lambda r+(1-\varepsilon)\eta\geq \varepsilon r_0, \quad \lambda\geq 0,
\end{flalign}     
\end{subequations}
is a feasible solution to \eqref{eq:CVAR,DRO}, with the Wasserstein ambiguity set $\mathbb{Q}$ defined in \eqref{eq:amb:set}-\eqref{eq:hatQ}. Moreover, if the Wasserstein radius $r$ is selected according to~\eqref{eq:radius}, any feasible solution to~\eqref{eq:trac1} is a feasible solution to~\eqref{eq:CVAR1} with the probability of at least $(1-\beta)$, for a user-specified confidence level $\beta\in(0, \,1)$.
\end{Theorem}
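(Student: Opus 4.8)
The plan is to follow the architecture of the proof of Theorem~\ref{thm:1}, using the optimization representation of CVaR in~\eqref{eq:CVaR:opt} to convert the distributionally robust CVaR constraint in~\eqref{eq:CVAR,DRO} back into a worst-case expectation, and then reusing the Wasserstein duality machinery. First I would collapse the nested $\sup$--$\min$. By~\eqref{eq:CVaR:opt}, for every $Q\in\mathbb Q$ and every $\eta\in\mathbb R$ we have $\mathrm{CVaR}^Q_{1-\varepsilon}(-\varrho_0^{\varphi})\le \mathbb E_Q[\eta+\tfrac1\varepsilon(-\varrho_0^{\varphi}(\vect u,\vect w,\vect\theta)-\eta)_+]$, since the left-hand side is the minimum over $\eta$. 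Taking the supremum over $Q\in\mathbb Q$, it therefore suffices to produce a single $\eta$ --- promoted to a decision variable --- for which $\sup_{Q\in\mathbb Q}\mathbb E_Q[\eta+\tfrac1\varepsilon(-\varrho_0^{\varphi}-\eta)_+]\le -r_0$ holds for all $\vect\theta\in\mathbb T$. This is precisely the step that yields a feasibility (inner-approximation) guarantee rather than an equivalence, as claimed in the statement.

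Next I would massage the truncated expectation into the $\inf$-form to which the Wasserstein duality already invoked in Theorem~\ref{thm:1} applies. Using $\max\{0,-\varrho_0^{\varphi}-\eta\}=-\eta-\min\{-\eta,\varrho_0^{\varphi}\}$, the condition above is equivalent to
\begin{equation*}
\inf_{Q\in\mathbb Q}\mathbb E_Q\!\left[\min\{-\eta,\varrho_0^{\varphi}(\vect u,\vect w,\vect\theta)\}\right]\ge \varepsilon r_0-(1-\varepsilon)\eta.
\end{equation*}
I would then apply the same strong-duality result used in Theorem~\ref{thm:1} (from~\cite{gao2023distributionally}), whose hypotheses --- semicontinuity and the growth condition~\eqref{eq:growrate} --- hold for $f(\vect w)=\min\{-\eta,\varrho_0^{\varphi}(\vect u,\vect w,\vect\theta)\}$ because $\varrho_0^{\varphi}$ is Lipschitz in $\vect w$ by Proposition~\ref{prop:lip} and the pointwise minimum preserves Lipschitz continuity. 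This rewrites the left-hand side as $\sup_{\lambda\ge0}\big[-\lambda r+\tfrac1M\sum_{i=1}^M\inf_{\vect w\in\mathbb W}(\min\{-\eta,\varrho_0^{\varphi}\}+\lambda\|\vect w-\vect w^i\|)\big]$, and for the inner approximation I would fix a single $\lambda\ge0$. The key simplification is that, since $\lambda\|\vect w-\vect w^i\|$ attains $0$ at $\vect w=\vect w^i$, the minimum distributes through the infimum, $\inf_{\vect w}(\min\{-\eta,\varrho_0^{\varphi}\}+\lambda\|\vect w-\vect w^i\|)=\min\{-\eta,\inf_{\vect w}(\varrho_0^{\varphi}+\lambda\|\vect w-\vect w^i\|)\}$. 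Introducing $y^i$ as a lower bound on this quantity, and taking the infimum over $\vect\theta\in\mathbb T$ as well to meet the $\forall\vect\theta$ requirement, splits exactly into the two constraints $y^i\le-\eta$ and $y^i\le\inf_{\vect w,\vect\theta}(\varrho_0^{\varphi}+\lambda\|\vect w-\vect w^i\|)$ of~\eqref{eq:trac1}, while the residual inequality $-\lambda r+\tfrac1M\sum_i y^i\ge \varepsilon r_0-(1-\varepsilon)\eta$ is the last constraint of~\eqref{eq:trac1}. This establishes the first (exact) claim.

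For the finite-sample statement I would reuse the measure-concentration bound of~\cite{fournier2015rate}: under Assumption~\ref{ass:light-tailed}, choosing $r$ as in~\eqref{eq:radius} guarantees $P^M\{P\in\mathbb Q\}\ge 1-\beta$. On this event, monotonicity of the supremum over the ambiguity set gives $\mathrm{CVaR}^P_{1-\varepsilon}(-\varrho_0^{\varphi})\le \sup_{Q\in\mathbb Q}\mathrm{CVaR}^Q_{1-\varepsilon}(-\varrho_0^{\varphi})\le -r_0$, which by~\eqref{eq:CVaR:opt} is exactly the statement that some $\eta$ satisfies the constraint of~\eqref{eq:CVAR1}; hence $\vect u$ is feasible for~\eqref{eq:CVAR1} with probability at least $(1-\beta)$. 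Note that only Assumptions~\ref{ass:light-tailed} and~\ref{Assum:predicate} enter, the CoM property of Assumption~\ref{ass:concentration} being unnecessary for the CVaR route. I expect the main obstacle to be the bookkeeping of inequality directions: because both the collapse of the $\sup$--$\min$ and the passage from the dual supremum over $\lambda$ to a single fixed $\lambda$ are one-sided relaxations, one must verify at each stage that the relaxation points in the direction that \emph{preserves} feasibility of the original DRO constraint, and in particular that lower-bounding the inner infimum by $y^i$ (rather than upper-bounding) is the correct choice. The distribution of $\min\{-\eta,\cdot\}$ through the infimum, which is what cleanly produces the two separate constraints of~\eqref{eq:trac1}, is the other place where care is required.
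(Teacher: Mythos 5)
Your proposal is correct and takes essentially the same approach as the paper's proof: you collapse the $\sup$--$\min$ by fixing $\eta$ (the paper's minimax-inequality step in~\eqref{eq:minmax}), apply the Wasserstein duality machinery of Theorem~\ref{thm:1} to the truncated robustness function, split the resulting constraint into the two constraints of~\eqref{eq:trac1}, and invoke the radius bound~\eqref{eq:radius} for the probabilistic claim. The only difference is bookkeeping: you absorb the shift by $\eta$ from the outset by working with $\min\{-\eta,\varrho_0^{\varphi}\}$, whereas the paper derives an intermediate program in terms of $(\varrho_0^{\varphi}+\eta)_-$ and then applies the change of variables $y^i-\eta\rightarrow y^i$.
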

\begin{proof}
    Using~\eqref{eq:CVaR:opt}, DRP~\eqref{eq:CVAR,DRO} is written equivalently as
\begin{align}\label{eq:ECP,CVAR,DRO}
    \min_{\vect u \in \mathbb{U}}\,\, & J(\vect u),\\
    \mathrm{s.t.}\,\, & \sup_{Q\in\mathbb{Q}}\min_{\eta}\!\mathbb{E}_Q\left[(-\varrho_0^{\varphi}(\vect u,\vect w,\vect\theta)-\eta)_+\right]\!\!+\!\varepsilon\eta\leq\! -\varepsilon r_0, \forall \vect\theta\!\in\! \mathbb{T}.\nonumber
\end{align}    
    Using the minimax inequality in the constraint of~\eqref{eq:ECP,CVAR,DRO}, we have
\begin{align}\label{eq:minmax}
    & \sup_{Q\in\mathbb{Q}}\min_{\eta}\mathbb{E}_Q\left[(-\varrho_0^{\varphi}(\vect u,\vect w,\vect\theta)-\eta)_+\right]+\varepsilon\eta \leq \\ & \min_{\eta} \sup_{Q\in\mathbb{Q}} \mathbb{E}_Q\left[(-\varrho_0^{\varphi}(\vect u,\vect w,\vect\theta)-\eta)_+\right]+\varepsilon\eta,\nonumber
\end{align}
and merging the decision variable $\eta$ into the main optimization of \eqref{eq:ECP,CVAR,DRO}, it can be under-approximated as follows:
\begin{align*}
    \min_{\vect u \in \mathbb{U},\eta}\,\, & J(\vect u),\\
    \mathrm{s.t.}\,\, & \sup_{Q\in\mathbb{Q}}\mathbb{E}^Q\!\left[(-\varrho_0^{\varphi}(\vect u,\vect w,\vect\theta)-\eta)_+\right]+\varepsilon\eta\leq -\varepsilon r_0, \forall \vect\theta\in \mathbb{T},
\end{align*}    
or equivalently
\begin{align*}
    \min_{\vect u \in \mathbb{U},\eta}\quad & J(\vect u),\\
    \mathrm{s.t.}\quad & \inf_{Q\in\mathbb{Q}}\mathbb{E}^Q\left[(\varrho_0^{\varphi}(\vect u,\vect w,\vect\theta)+\eta)_-\right]-\varepsilon\eta\geq \varepsilon r_0,\, \forall \vect\theta\in \mathbb{T},
\end{align*}    
where $(\cdot)_-:=\min\{0,\cdot\}$. Analogous to Theorem~\ref{thm:1}, the following data-driven optimization is obtained:
\begin{subequations}\label{eq:trac2}
\begin{flalign}
&\inf_{\vect u\in\mathbb U, \lambda,\vect y,\eta}\,\, J(\vect u),\\
      &\,\,\,\mathrm{s.t.}\,\, \inf_{\vect w\in \mathbb{W},\, \vect \theta\in\mathbb{T}} \left[(\varrho_0^{\varphi}(\vect u,\vect w,\vect\theta)+\eta)_-+\lambda \|  \vect w- \vect w^i\|\right]\geq y^i,\nonumber\\&\qquad\,\,\,\qquad\qquad\qquad\qquad\qquad\qquad \forall i\in\{1,\ldots, M\},\label{eq:trac1:cons2} \\&\qquad
 \frac{1}{M} \sum_{i=1}^{M} y^i-\lambda r-\varepsilon\eta\geq \varepsilon r_0,\quad \lambda\geq 0.
\end{flalign} 
\end{subequations}
Note that constraint~\eqref{eq:trac1:cons2} can be satisfied by the two following constraints:
\begin{align*}
    &\inf_{\vect w\in \mathbb{W},\vect\theta \in\mathbb{T}} \left[\varrho_0^{\varphi}(\vect u,\vect w,\vect\theta)+\eta+\lambda \|  \vect w- \vect w^i\|\right]\geq y^i,\\
    &\inf_{\vect w\in \mathbb{W}} \left[\lambda \|  \vect w- \vect w^i\|\right]\geq y^i\qquad \Leftrightarrow \qquad 0 \geq y^i.
\end{align*}
Thus, optimization~\eqref{eq:trac1} is obtained by rearranging and changing $y^i-\eta \rightarrow y^i$. The remainder of the proof is derived similarly to that of Theorem~\ref{thm:1}. 
\end{proof}
Theorems~\ref{thm:1}--\ref{thm:2} offer distributionally robust data-driven optimizations using the CoM and CVaR approaches to solve chance constraints. As observed, the use of strong duality in proving Theorem~\ref{thm:1} in the second equality of~\eqref{eq:Lag} makes the data-driven optimization~\eqref{eq:trac0} equivalent to the corresponding DRP~\eqref{eq:SP2}. In contrast, the min-max inequality employed in Theorem~\ref{thm:2} in~\eqref{eq:minmax} leads to the resulting data-driven optimization~\eqref{eq:trac1} being an under-approximation of the corresponding DRP~\eqref{eq:CVAR,DRO}, as strong duality does not generally hold in~\eqref{eq:minmax}. This observation potentially makes the distributionally robust CVaR approach more conservative than the one using the CoM theory. The case study in Section~\ref{sec:case1} will illustrate this observation.

We point out here that, in CCP~\eqref{eq:CCP}, ECPs~\eqref{eq:EP} and ~\eqref{eq:CVAR1}, the constraints must hold for all $\vect\theta\in\mathbb {T}$. While in the corresponding data-driven DRPs in~\eqref{eq:trac0} and \eqref{eq:trac1}, this robustness can be incorporated in the infimum on the disturbance variable and practically does not add more complexity to the optimizations. However, solving the constraints~\eqref{eq:trac0:cons2} and \eqref{eq:cons:trac1} is still challenging due to the existence of $\inf$ in the constraints. This challenge increases upon noticing that, unfortunately, the function $\varrho_0^\varphi$ does not have smoothness properties because it consists of non-smooth $\min$ and $\max$ operators. In the next section, we will detail the smoothing of $\varrho_0^\varphi$ and the numerical considerations in solving the robust optimization.

\section{Solving the formulated robust programs}\label{sec:rp}
In this section, we provide details on the smooth approximation of the robustness function $\varrho_0^\varphi$ and explain how to tackle the $\inf$ in the constraints of \eqref{eq:trac0} and \eqref{eq:trac1} numerically. Moreover, we provide two remarks on potential extensions to the current setting.

The $\min$ and $\max$ operators utilized in defining $\varrho_0^\varphi$ in Section~\ref{sec:stl} are not smooth. Numerical solvers commonly encounter difficulties when these operators appear in the objective function or constraints. Inspired by~\cite{gilpin2020smooth}, we opt for smooth under-approximations for these operators, as
\begin{subequations}
\begin{align*}
\min ([a_1,\ldots,a_m]^\top) & \approx -\frac{1}{C} \log\left(\sum_{i=1}^m \exp({-Ca_i})\right),\\
\max ([a_1,\ldots,a_m]^\top) & \approx \frac{\sum_{i=1}^m a_i \exp({Ca_i})}{\sum_{i=1}^m \exp({Ca_i})},
\end{align*}
\end{subequations}
where $C$ is a positive constant. It is noteworthy that these approximations under-approximate the exact $\min$ and $\max$ operators. Consequently, the robust semantics derived from these approximations are not greater than the original robust semantics written for the STL specification in its positive normal form \cite{kazemi2020formal}, i.e., when the specification is written equivalently in a form that has negation appearing only besides the atomic predicates. Hence, fulfilling the approximated robust semantics ensures the satisfaction of the original semantics directly. Additionally, as demonstrated in~\cite{gilpin2020smooth}, for a sufficiently large $C$, the approximated robust semantics converge to the original semantics with the exact $\min$ and $\max$ operators.
Note that in transitioning from CCP~\eqref{eq:CCP} to ECP~\eqref{eq:EP}, we can assume that the robustness function $\varrho_0^\varphi$ is evaluated using the exact $\min$ and $\max$ operators, ensuring the validity of the Lipschitz constant $L_\varphi$ as obtained in Section~\ref{sec:COM}. We then substitute the expectation of the exact robustness with the expectation of the under-approximated robustness in the constraint~\eqref{eq:EP:const}. Therefore, the results of the paper, and particularly the Lipschitz constant $L_\varphi$, remain valid when employing these under-approximations of the exact $\min$ and $\max$ operators.

Another challenge is how to tackle the $\inf$ in the constraints of \eqref{eq:trac0} and \eqref{eq:trac1} numerically. An efficient numerical method for solving the robust optimization problems in \eqref{eq:trac0} and \eqref{eq:trac1}, which involve inner optimizations within the constraints, is the sequential quadratic programming (SQP) method. SQP is an iterative optimization method used for solving nonlinear constrained optimization problems by solving a series of quadratic subproblems that approximate the original problem at each iteration. At each step of the SQP, it is necessary to compute the values, gradients, and Hessians of both the objective function and the constraints with respect to the primary decision variables. These calculations are straightforward for the objective function and the constraints, typically simple linear or quadratic functions, except for \eqref{eq:trac0:cons2} and \eqref{eq:cons:trac1}, which involve inner optimizations and are more complex to handle. To compute these quantities for the robust constraints, the following steps are used:
\begin{itemize}
    \item In each SQP iteration, the values of the constraints in \eqref{eq:trac0:cons2} and \eqref{eq:cons:trac1} are calculated for a specific decision variable by solving the inner optimizations in the constraints at the current values of that variable.
\item  To compute the gradient and Hessian, a sensitivity analysis of the inner optimizations with respect to the main decision variables is performed, since these variables serve as parameters in the inner optimizations.
\end{itemize}
Therefore, by solving the inner optimizations and performing sensitivity analysis in each SQP step, \eqref{eq:trac0} and \eqref{eq:trac1} can be solved efficiently.

In the following, we present two remarks about possible extensions to the problem setting in the paper.

\begin{Remark}
 The linear consideration of dynamics in Section~\ref{sec:dyn} was applied only in the calculation of the Lipschitz constant of the robustness function in Proposition~\ref{prop:lip}. For linear dynamics, due to the simplicity of the closed-form mapping from disturbance to state trajectory, obtaining the Lipschitz constant is straightforward. However, for nonlinear dynamics, by considering assumptions on the Lipschitz continuity of the nonlinear functions governing the dynamics, it is possible to derive similar theorems by evaluating an upper bound on the Lipschitz constant of the state trajectory with respect to disturbances. Therefore, with appropriate regularity assumptions, the proposed method can also be extended to nonlinear dynamics.
\end{Remark}

\begin{Remark} As mentioned, the objective function in this paper was generally considered as a function of only the control inputs, in the form $J(\vect u)$. However, as utilized in~\cite{kordabad2024distributionally}, this function can also be expressed in a more general form, incorporating inputs, states, and uncertainties. In this case, similar to the feasibility guarantees, with an appropriate selection of the Wasserstein radius, a probabilistic guarantee for the upper bound of the performance function can be established. For more details on this, see~\cite{kordabad2024distributionally} for the single agent setting.
\end{Remark}

\section{Case studies}\label{sec:examples}
In the following, we provide two case studies to illustrate and compare the efficiency of the proposed methods.
\subsection{Comparing CoM and CVaR}\label{sec:case1}
This simple example aims to compare the conservation properties of the CoM approach with the CVaR approach when handling chance constraints and their distributionally robust programs. Consider the following CCP:
\begin{subequations}\label{eq:case1:CCP}
   \begin{align}
\min_{u\in \mathbb{R}}\quad &\mathbb E_{w}[J(u,w)],\label{eq:case1:CCP:cost}\\
    \mathrm{s.t.}\quad  &\mathbb P\{u+w\leq 0\}\geq 1-\varepsilon,\label{eq:case1:CCP:cons}
\end{align} 
\end{subequations}
where $J(u,w):=|u+w|$, $\varepsilon\in(0, 1)$ is the probability threshold, and $w \sim \mathcal{N}(0,1)$ is a random variable, drawn from the standard normal distribution. The chance constraint in~\eqref{eq:case1:CCP:cost} equivalently can be written as
\begin{equation}\label{eq:feas:ccp}
u+F^{-1}_w(1-\varepsilon)= u+{\sqrt{2}\mathrm{erf}^{-1} (1-2\varepsilon)}\leq 0,
\end{equation}
where $F_w$ is the cumulative distribution function of the random variable $w$, and $\mathrm{erf}$ is the error function. Inequality~\eqref{eq:feas:ccp} represents the CCP feasible set for $u$. Note that, the provided CCP is simple enough to evaluate its exact solution in closed form. 
\smallskip
\newline
\textbf{Comparing the resulting DRPs:} In the following, we aim to apply the CoM and CVaR methods to CCP~\eqref{eq:case1:CCP} and compare the resulting ECPs. 
\smallskip
\subsubsection*{CoM}
From the CoM inequality provided in Theorem~\ref{the:EP}, the chance constraint in~\eqref{eq:case1:CCP:cost} can be approximated as follows:
\begin{equation*}
      \mathbb E[u+w]+h^{-1}(\varepsilon)\leq 0 \Rightarrow  u+h^{-1}(\varepsilon)\leq 0,
\end{equation*}
where for the Gaussian distribution, we use a tight $h$ in the form of $h(t)=\min\{e^{-t^2/2},1\}$, and equivalently $h^{-1}(\varepsilon)=\sqrt{2\log(\frac{1}{\varepsilon})}$. Note that the linear function inside the chance constraint is Lipschitz continuous with constant one.
\smallskip
\subsubsection*{CVaR}
Applying the CVaR definition, provided in~\eqref{eq:CVaR:opt} to the chance constraint in~\eqref{eq:case1:CCP} reads as
\begin{equation}\label{eq:case1:cvar}
    \min_{\eta}\left\{\eta+\frac{1}{\varepsilon}\mathbb E\left[(u+w-\eta)_+\right]\right\}\leq 0.
\end{equation}
We then observe that
\begin{align}\label{eq:case1:cvar1}
    \mathbb E[(u+w-\eta)_+]&=\int_{0}^\infty \frac{t}{\sqrt{2\pi}} e^{-\frac{1}{2}\left({t-u+\eta}\right)^2}\mathrm{d}t=\\
    & \frac{1}{\sqrt{2\pi}}\mathrm{e}^{-\frac{1}{2}(-u+\eta)^2}\!+\!\frac{u-\eta}{2}\left(\!\mathrm{erf}(\frac{u-\eta}{\sqrt{2}})\!+\!1\!\right).\nonumber
\end{align}
Substituting~\eqref{eq:case1:cvar1} into~\eqref{eq:case1:cvar}, differentiating inside the optimization in~\eqref{eq:case1:cvar} and setting it to zero results in
\begin{equation*}
    \eta^\star=u+\sqrt{2} \mathrm{erf}^{-1}(1-2\varepsilon),
\end{equation*}
or equivalently,
\begin{equation*}
     \mathrm{erf}\left(\frac{u-\eta^\star}{\sqrt{2}}\right)=2\varepsilon-1,
\end{equation*}   
where $\eta^\star$ is the optimal value of $\eta$. The chance constraint in~\eqref{eq:case1:CCP} can then be approximated using CVaR as follows:
\begin{equation*}
 u+\frac{1}{\varepsilon\sqrt{2\pi}}\mathrm{e}^{-\left(\mathrm{erf}^{-1}\left(1-{2\varepsilon}\right)\right)^2} \leq 0.
\end{equation*}

Figure~\ref{fig:1} compares the feasible domains of the CCP with two approaches for approximating the CCP to an ECP at varying probability thresholds $\varepsilon$. As shown, both methods under-approximate the original CCP, while the CoM results in a smaller feasible domain, making it a more conservative approach for this example.
\begin{figure}[t]
\centering
\includegraphics[width=0.48\textwidth]{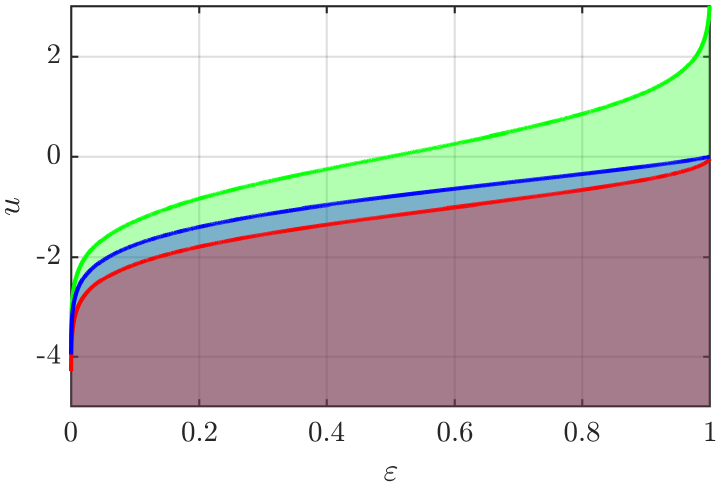}
\caption{Feasible domains for the original CCP (green), the CVaR-based approximation (blue), and the CoM-based approximation (red).}
\label{fig:1}
\end{figure}

\smallskip
\textbf{Comparing the resulting DRPs:}
We then compare distributionally robust formulations of the chance constraint based on the provided approaches.
\smallskip
\subsubsection*{CoM}
For the CoM approach, the distributionally robust expression of the constraint can be written as
\begin{equation}\label{eq:DRO:CoM}
    \sup_{Q\in\mathbb Q} \mathbb E_{Q}[u+w]+h^{-1}(\varepsilon)\leq 0,
\end{equation}
where $\mathbb Q$ is the Wasserstein ambiguity set as defined in~\eqref{eq:amb:set}-\eqref{eq:hatQ}. A data-driven solution to the worst-case expectation appeared in~\eqref{eq:DRO:CoM} can be written as
\begin{align*}
&\sup_{Q\in\mathbb Q} \mathbb E_{Q}[u+w]=\\ &\left\{\begin{array}{ll}
\inf_{y_i,\lambda} &\lambda r+\frac{1}{M}\sum_{i=1}^M y_i\\
    \mathrm{s.t.} &\sup_{w\in\mathbb R}\{u+w-\lambda|w-w_i|\}\leq y_i,\quad 0\leq \lambda,
\end{array}\right.\nonumber\\ &=r+u+\frac{1}{M}\sum_{i=1}^M w_i.
\end{align*}
\smallskip
\subsubsection*{CVaR}
For the distributionally robust CVaR, we have 
\begin{equation*}
    \sup_{Q\in\mathbb Q} \mathrm{CVaR}_{1-\varepsilon}^Q (u+w)\leq 0,
\end{equation*}
and similarly, 
\begin{align*}
   &\sup_{Q\in\mathbb Q} \mathrm{CVaR}_{1-\varepsilon}^Q (u+w)\!\leq\! \inf_{\eta} \eta+\frac{1}{\varepsilon} \sup_{Q\in\mathbb Q}\mathbb E_{Q}[(u+w-\eta)_+]\!\!=\nonumber \\ &\left\{\begin{array}{ll}
\inf_{\eta,y_i,\lambda} &\eta+\frac{1}{\varepsilon}\left\{\lambda r++\frac{1}{M}\sum_{i=1}^M y_i\right\}\\
    \mathrm{s.t.} &\!\!\!\!\!\!\!\!\!\sup_{w\in\mathbb R}\{(u+w-\eta)_+\!-\!\lambda|w-w_i|\}\leq y_i, 0\leq \lambda,
\end{array}\right.\nonumber\\&= \frac{r}{\varepsilon}+\inf_{\eta}\left\{\eta+\frac{1}{M \varepsilon }\sum_{i=1}^M \max(0,w_i+u+\eta)\right\}.
\end{align*}
We now compare these two approaches. We consider samples drawn from the distribution $\mathcal{N}(0,1)$ and let $M\rightarrow \infty$. One can see that in the DRP-CoM an addition term with respect to the ECP is $r$, while $r/\varepsilon$ is the additional term for distributionally robust CVaR. Figure~\ref{fig:2} (top) compares the feasible domains of DRP-CoM and DRP-CVaR for a given Wasserstein radius $r=1$ and varying probability threshold $\varepsilon$. This figure shows that DRP-CoM is less conservative than DRP-CVaR for $\varepsilon< 0.748$. Figure~\ref{fig:2} (bottom) compares these two approaches for a fixed $\varepsilon=0.6$ and varying Wasserstein radius. This figure also shows that DRP-CoM is less conservative than DRP-CVaR for $r> 0.56$.

\begin{figure}[t]
\centering
\includegraphics[width=0.48\textwidth]{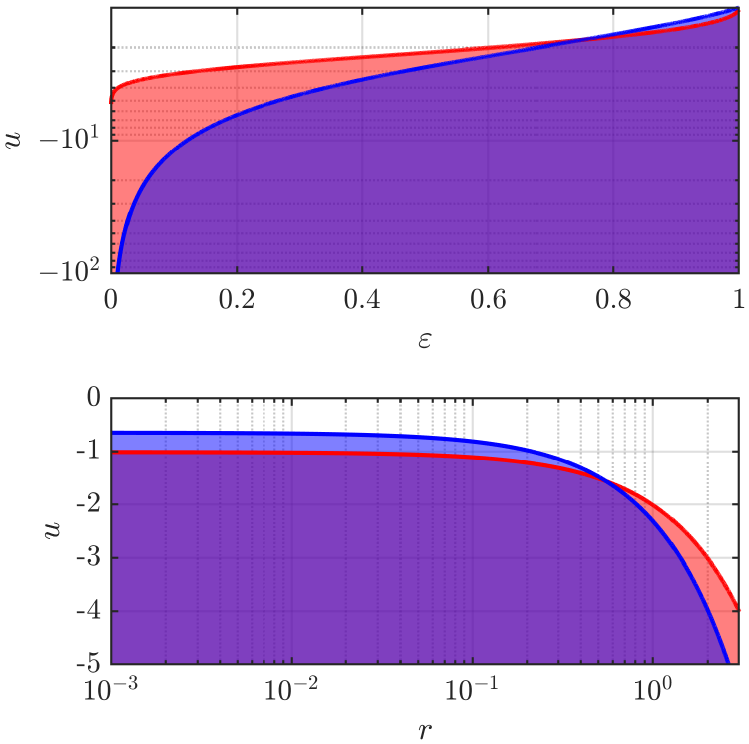}
\caption{Feasible domain for the DRP-CoM (in red) and DRP-CVaR (in blue) optimizations for varying probabilistic thresholds $\varepsilon$ and fixed Wasserstein radius $r=1$ (top), and for varying Wasserstein radii $r$ and fixed probabilistic threshold $\varepsilon=0.6$ (bottom). It can be observed that, depending on the problem parameters such as $\varepsilon$ and $r$, either the CVaR or the CoM approach may be employed effectively.}
\label{fig:2}
\end{figure}

Figure~\ref{fig:3} compares the distributions of objective functions and their expectations for these two approaches, as well as their DRP formulation, for $\varepsilon=0.6$ and $r=1$. The top (bottom) figure compares the CoM (CVaR) ECP solution, denoted by $u^\star_{\mathrm{CoM}}$ ($u^\star_{\mathrm{CVaR}}$) in red, with its DRP solution $\hat u_{\mathrm{CoM}}$ ($\hat u_{\mathrm{CVaR}}$) in blue. As can be observed, the blue distribution exhibits larger values, highlighting the robustness of the DRP approach.  

\begin{figure}[t]
\centering
\includegraphics[width=0.48\textwidth]{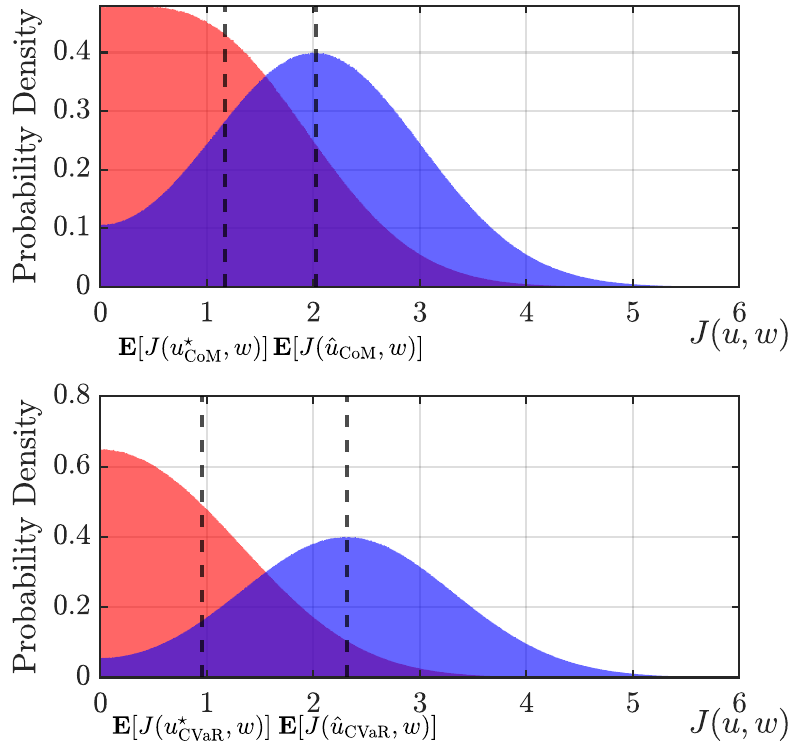}
\caption{Comparison of the objective function distribution for CoM (top) and CVaR (bottom) with their respective DRP solutions.}
\label{fig:3}
\end{figure}

\subsection{Two-agents scenario}
In this example, we address control synthesis for a stochastic controlled agent interacting with an uncontrollable agent. The objective is to minimize control effort while ensuring compliance with a probabilistic STL specification. The STL specification for the controlled agent involves reaching a specific target set while consistently avoiding collisions with the uncontrollable agent, accounting for uncertainties in the behavior of the uncontrollable agent.

We consider the controlled agent to have the following dynamics:
\begin{equation*}
x_{k+1}\!=\!\!\begin{bmatrix}
1 & 0\\ 
0 & 0.8
\end{bmatrix} x_k+ \begin{bmatrix}
0 & -0.1\\ 
0 & 0
\end{bmatrix}\!y_k+ \begin{bmatrix}
1 & 0\\ 
0 & 1
\end{bmatrix}\!u_k+0.005\! \begin{bmatrix}
w_1 \\ 
w_2
\end{bmatrix},    
\end{equation*}
where $w_i\sim \mathcal{N}(0,1)$ (assumed to be unknown), and the uncontrollable agent is modeled as follows,
\begin{equation*}
        y_{k+1}=\begin{bmatrix}
0.9 & 0\\ 
0 & 1
\end{bmatrix} y_k+ \begin{bmatrix}
0.2 & 0\\ 
0 & -0.1
\end{bmatrix}x_k+ 0.005\begin{bmatrix}
\theta_1 \\ 
\theta_2
\end{bmatrix},
\end{equation*}
where $ \|[\theta_1, \theta_2]\|\leq 1$. We aim to design a control strategy for the controlled agent to reach the region $\|x-[0.5,\,0.5]^\top \|\leq 0.1$ within $20$ steps, while consistently avoiding the other agent for all $ \|[\theta_1, \theta_2]\|\leq 1$ with at least $0.9$ probability, and minimizing a quadratic control effort, given by $J(\vect u)=\sum_{0}^{19} u^\intercal_ku_k$. By setting $x_0=[0,\,0]$ and $y_0=[0.2,\,0.3]^\top$, Figure~\ref{g:1} shows the trajectories of both agents, obtained by applying the sample-average approximation method for the corresponding ECPs extracted from the CoM and CVaR approaches. Both approaches result in fairly similar paths, successfully fulfilling the STL task.

\begin{figure}[t]
\centering
\includegraphics[width=0.48\textwidth]{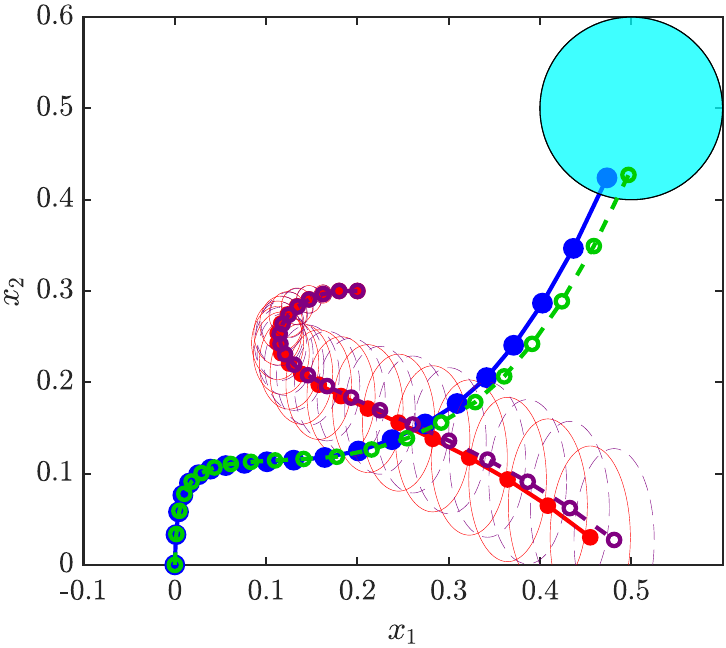}
\caption{Trajectories of the controlled agent are shown in solid blue (dashed green), and the corresponding uncontrollable agent path in solid red (dashed purple), with its possible uncertainties represented by ellipsoids, using the CoM (CVaR) method and sample-average approximation.}
\label{g:1}
\end{figure}

Figure~\ref{g:5} compares the DRP-CoM and DRP-CVaR approaches. It can be seen that in the distributionally robust version, CVaR is more conservative in both avoidance and reachability tasks.

\begin{figure}[t]
\centering
\includegraphics[width=0.48\textwidth]{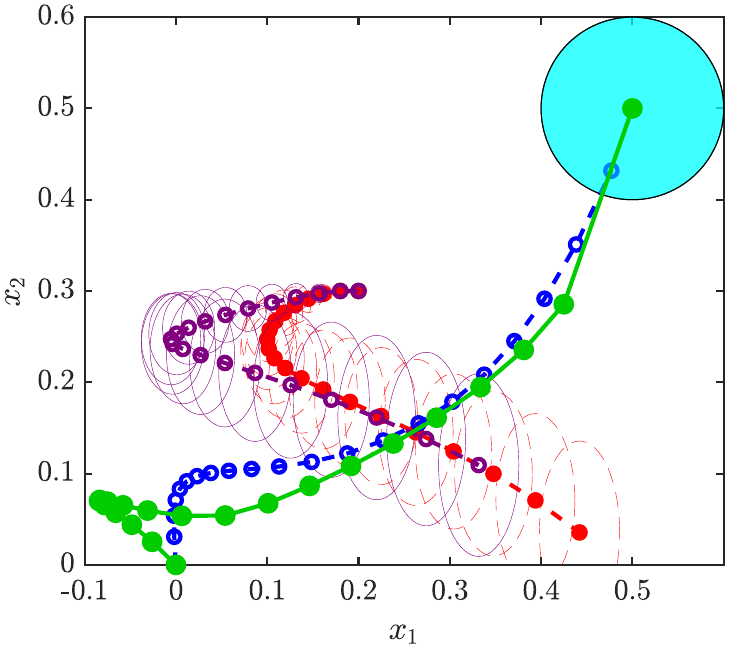}
\caption{Trajectories of the controlled agent are shown in dashed blue (solid green), and the corresponding uncontrollable agent path in dashed red (solid purple), with its possible uncertainties represented by ellipsoids, using the DRP-CoM (DRP-CVaR) method.}
\label{g:5}
\end{figure}

Figure~\ref{g:4} compares the minimum distance between the controlled agent and the uncertain ellipsoid, representing all possible uncontrollable agent locations. In both the sample-average method and DRP, the CVaR approach yields a more conservative solution (i.e., greater distance), shown in green, compared to the CoM approach, shown in blue. The dashed lines correspond to the DRP solution with the same color. As expected, the DRP solution is more conservative in both approaches compared to their ECP solutions.
\begin{figure}[t]
\centering
\includegraphics[width=0.48\textwidth]{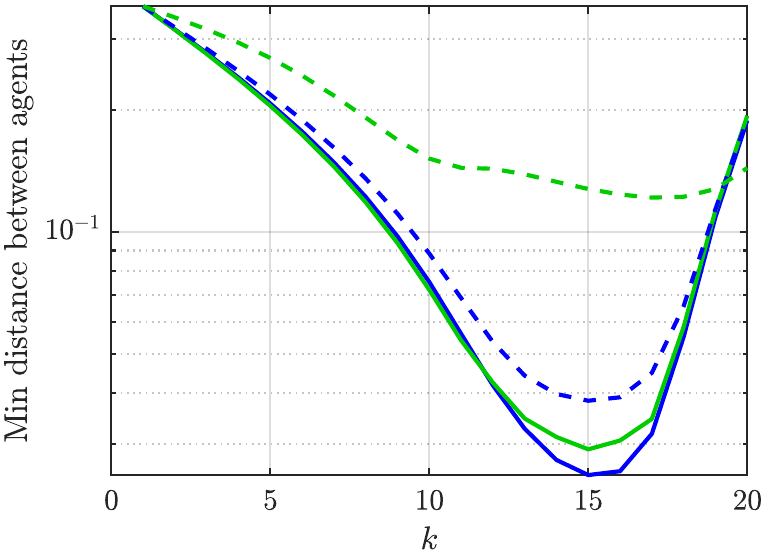}
\caption{The minimum distance between the agents is shown for the CoM approach (in blue) and the CVaR approach (in green), with their corresponding DRP solutions represented by dashed lines in the same colors.}
\label{g:4}
\end{figure}

Figure~\ref{g:2} illustrates 200 different trajectories for both agents. The top two subfigures show the sample-average method, with the left representing the CoM method and the right representing the CVaR method. The bottom two subfigures display the corresponding DRP solutions for each approach. Among these trajectories, $20$ ($14$) failed to meet the STL specification for the ECP solution using the CoM (CVaR) method, while $14$ ($12$) failed for the DRP solution using CoM (CVaR).

\begin{figure}[t]
\centering
\includegraphics[width=0.48\textwidth]{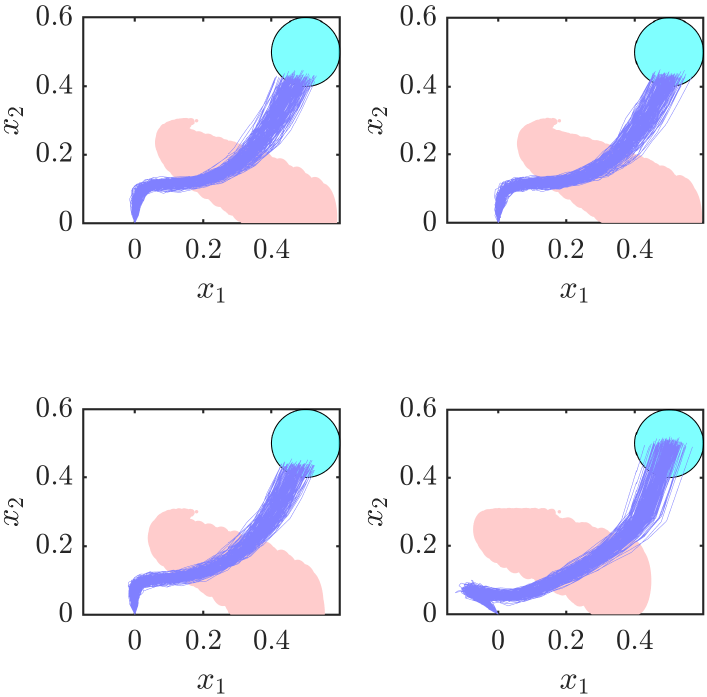}
\caption{Different trajectories for both agents using the CoM method (top left) and CVaR method (top right) in the sample-average approach for ECP, along with their corresponding DRP solutions (bottom left and bottom right).}
\label{g:2}
\end{figure}

Figure~\ref{g:3} displays the distribution of the robustness function and its expectations using the CoM method (red) and the CVaR method (blue) for both the sample-average ECP (top) and DRP (bottom) solutions. It is important to note that the positivity of the robustness measure indicates satisfaction of the STL specification. These figures further confirm the conservativeness of the CVaR approaches in both cases and demonstrate how a distributionally robust solution can yield a more positive robustness function with high probability.

\begin{figure}[t]
\centering
\includegraphics[width=0.48\textwidth]{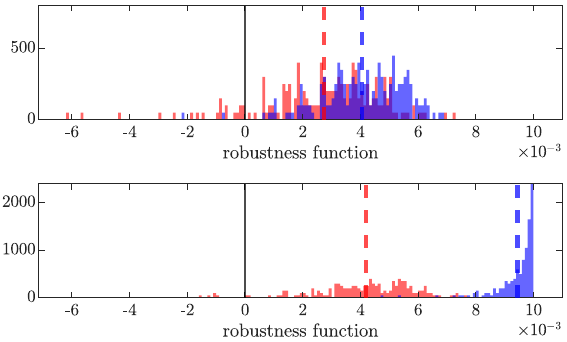}
\caption{The distribution of the robustness function and its expectations using the CoM method (red) and the CVaR method (blue) for both the sample-average ECP (top) and DRP (bottom) solutions.}
\label{g:3}
\end{figure}

Figure~\ref{g:6} illustrates the control inputs for both the CoM method (blue) and the CVaR method (dashed red) using the sample-average ECP (top) and DRP (bottom). The figure shows that in the ECP, both methods yield almost similar control inputs, whereas in the DRP, the CVaR approach produces greater variation in the control inputs. The comparison of the objective cost is presented in Figure~\ref{g:7}, highlighting the effect of the conservativeness of the CVaR approach on the cost function in both the sample-average ECP (two bars on the left) and DRP (two bars on the right) solutions.

\begin{figure}[t]
\centering
\includegraphics[width=0.48\textwidth]{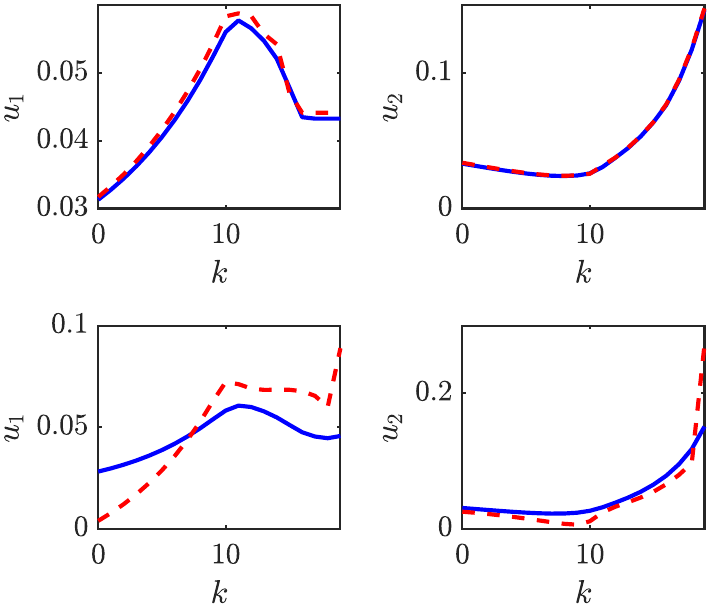}
\caption{Control inputs for the CoM method (blue) and the CVaR method (dashed red) using the sample-average ECP (top) and DRP (bottom).}
\label{g:6}
\end{figure}

\begin{figure}[t]
\centering
\includegraphics[width=0.48\textwidth]{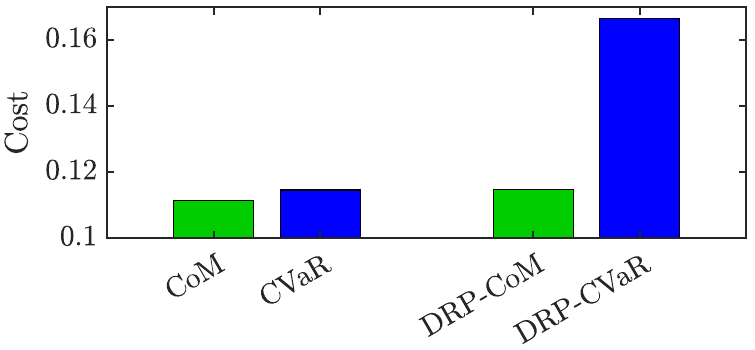}
\caption{Comparison of the objective cost associated with the CoM (CoM) and CVaR methods in green and blue, respectively, for both sample-average ECP (two bars on the left) and DRP (two bars on the right) solutions.}
\label{g:7}
\end{figure}

\section{Conclusions}\label{sec:concl}
This paper addressed chance-constrained programs (CCP) for systems with stochastic dynamics and signal temporal logic (STL) specifications, aiming to ensure that a controlled agent interacting with uncertain uncontrollable agents satisfies probabilistic STL requirements. The two proposed approaches, based on concentration of measure (CoM) and conditional value at risk (CVaR), transformed CCPs into expectation-constrained programs (ECPs), simplifying the optimization process. By employing a Wasserstein distributionally robust program (DRP), the paper introduced a data-driven method that provided probabilistic guarantees on the feasibility and performance of the control strategy. Numerical case studies confirmed the effectiveness of both methods in ensuring STL satisfaction across various problem settings.

For future work, we plan to extend the framework to distributed control for multi-agent systems to improve scalability and reduce computational complexity for applications such as large-scale autonomous driving and robotics. In addition, investigating possible methods based on scenario optimization to solve CCP with the STL robustness function, which is generally non-convex, can be considered for further research.

\bibliographystyle{IEEEtran}
\bibliography{DRSTL}  
\begin{IEEEbiography}[{\includegraphics[width=1in,height=1.25in,clip,keepaspectratio]{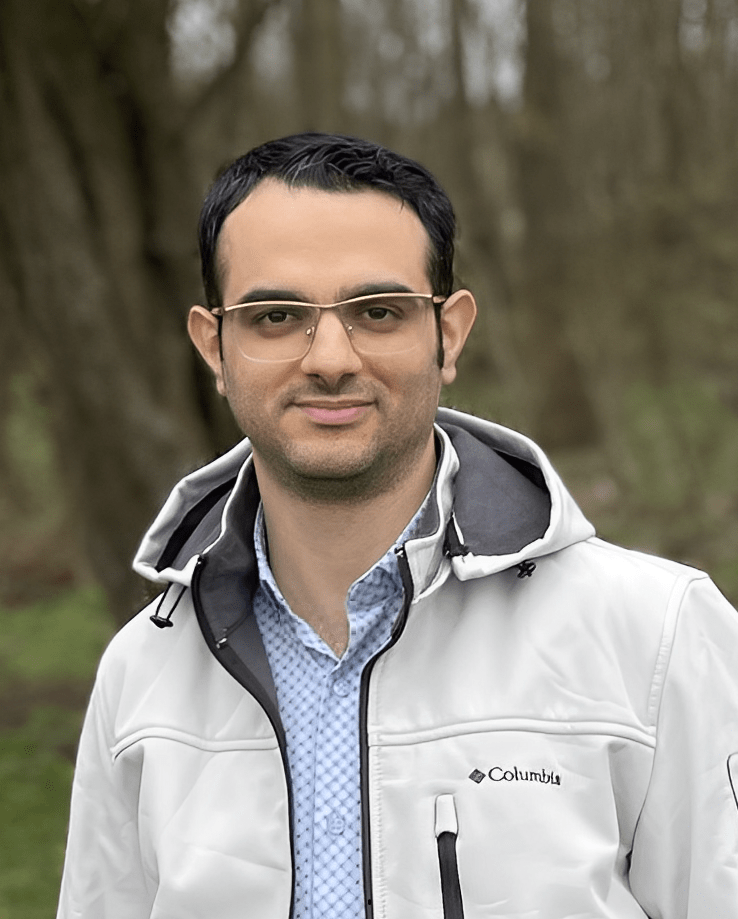}}]{Arash Bahari Kordabad} received his Ph.D. degree from the Department of Cybernetics Engineering at the Norwegian University of Science and Technology (NTNU), Trondheim, Norway, in 2023. Before joining NTNU, he got his B.Sc. in mechanical engineering at the University of Tabriz, Tabriz, Iran, in 2017, and his M.Sc. degree in mechanical engineering from the Sharif University of Technology, Tehran, in 2019. He currently works at the Max Planck Institute for Software Systems, Kaiserslautern, Germany, as a postdoctoral researcher. His research interests include safe reinforcement learning, model predictive control, and optimization for autonomous vehicles, smart-grid applications, and multi-agent systems.
\end{IEEEbiography}

\vspace{-0.9cm}
\begin{IEEEbiography}[{\includegraphics[width=1in,height=1.25in,clip,keepaspectratio]{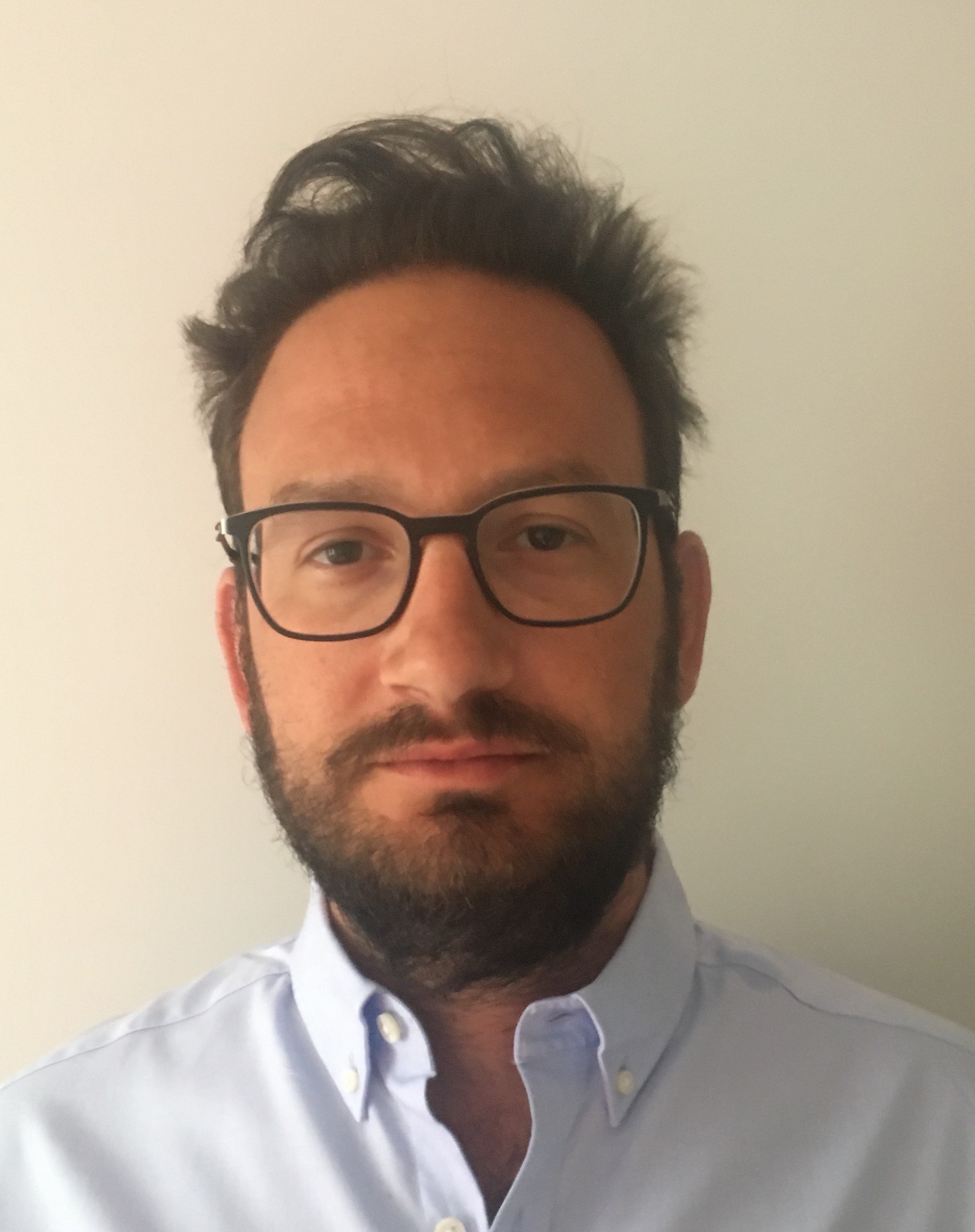}}]{Eleftherios E. Vlahakis} received the M.Eng. degree in electrical and computer engineering from the Aristotle University of Thessaloniki, Thessaloniki, Greece, in 2007, the M.Sc. degree in energy management and production from the National Technical University of Athens, Athens, Greece, in 2014, and the Ph.D. degree in Control Theory from the City University of London, U.K., in 2020. From 2009 to 2015, he worked in various industrial roles in large-scale power networks and renewable energy systems. From 2020 to 2022, he was a Research Fellow with the School of EEECS, Queen’s University Belfast, U.K. He is currently a Postdoctoral Researcher with the Division of Decision and Control Systems, KTH Royal Institute of Technology, Stockholm, Sweden. His 
research interests include distributed control, multi-agent systems, and hybrid systems.
\end{IEEEbiography}

\begin{IEEEbiography}[{\includegraphics[width=1in,height=1.25in,clip,keepaspectratio]{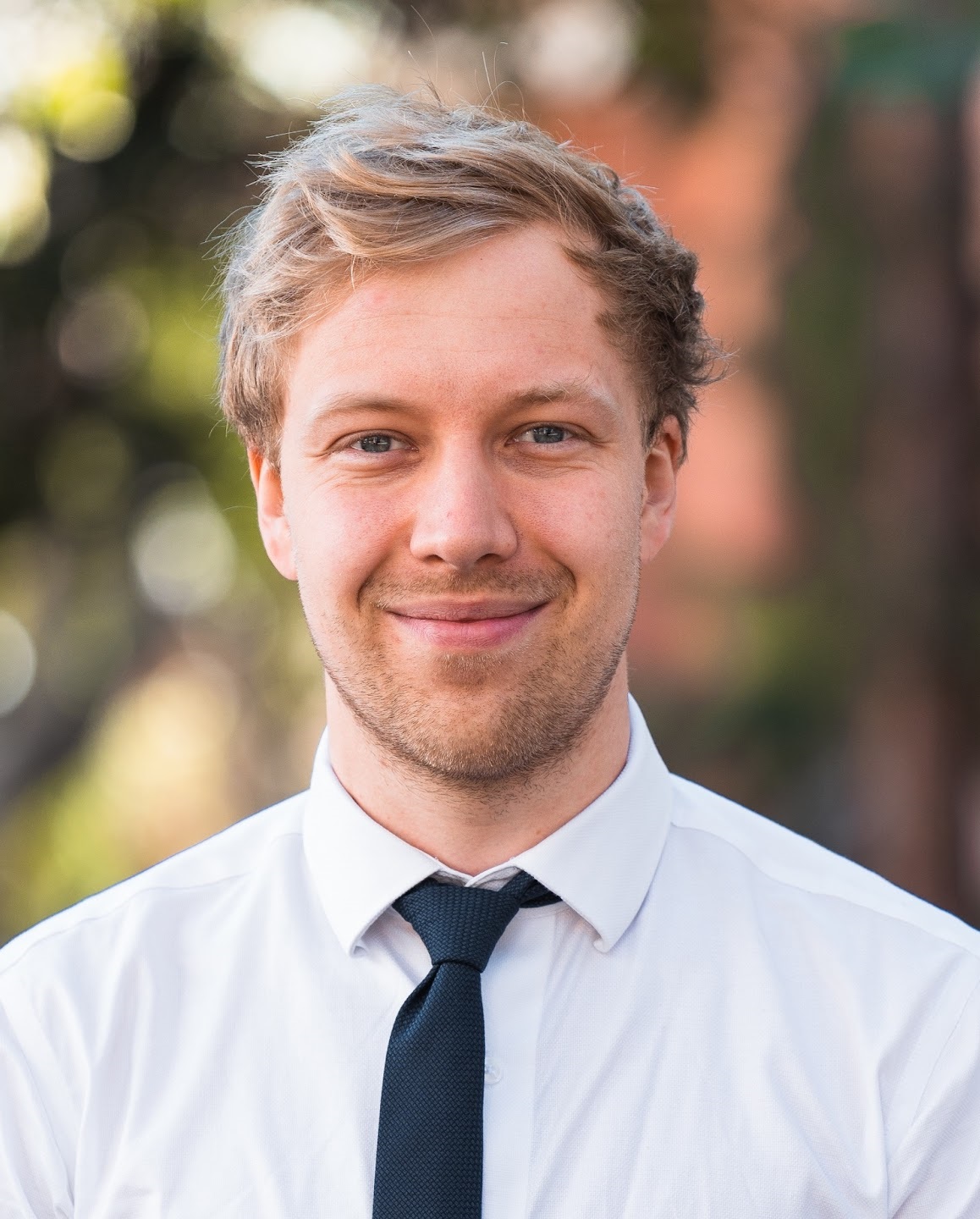}}]{Lars Lindemann} received the Ph.D. degree in electrical engineering from the KTH Royal Institute of Technology, Stockholm, Sweden, in 2020. He is currently an Assistant Professor with the Thomas Lord Department of Computer Science, University of Southern California, Los Angeles, CA, USA, where he is also a Member of the Ming Hsieh Department of Electrical and Computer Engineering (by courtesy), Robotics and Autonomous Systems Center, and Center for Autonomy and Artificial Intelligence. Between 2020 and 2022, he was a Postdoctoral Fellow with the Department of Electrical and Systems Engineering at the University of Pennsylvania. His research interests include systems and control theory, formal methods, and autonomous systems. He was the recipient of the Outstanding Student Paper Award at the 58th IEEE Conference on Decision and Control and the Student Best Paper Award (as a co-advisor) at the 60th IEEE Conference on Decision and Control. He was finalist for the Best Paper Award at the 2022 Conference on Hybrid Systems: Computation and Control and for the Best Student Paper Award at the 2018 American Control Conference.
\end{IEEEbiography}
\vspace{-1cm}
\begin{IEEEbiography}[{\includegraphics[width=1in,height=1.25in,clip,keepaspectratio]{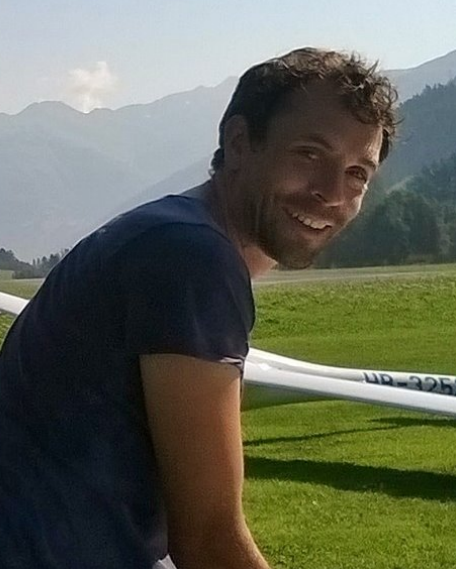}}]{Sebastien Gros} received his Ph.D degree from EPFL, Switzerland, in 2007. After a journey by bicycle from Switzerland to the Everest base camp in full autonomy, he joined an R\&D group hosted at Strathclyde University focusing on wind turbine control. In 2011, he joined the university of KU Leuven, where his main research focus was on optimal control and fast NMPC for complex mechanical systems. He joined the Department of Signals and Systems at Chalmers University of Technology, G\"{o}teborg in 2013, where he became associate Prof. in 2017. He is now full Prof. at NTNU, Norway, head of department at department of engineering cybernetics and affiliate Prof. at Chalmers. His main research interests include numerical methods, real-time optimal control, reinforcement learning, and the optimal control of energy-related applications.
\end{IEEEbiography}
\vspace{-1cm}
\begin{IEEEbiography}[{\includegraphics[width=1in,height=1.25in,clip,keepaspectratio]{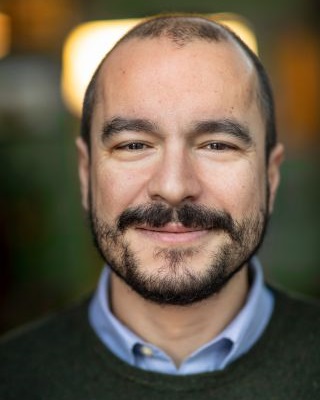}}]{Dimos V. Dimarogonas} was born in Athens, Greece, in 1978. He received the Diploma in Electrical and Computer Engineering in 2001 and the Ph.D. in Mechanical Engineering in 2007, both from the National Technical University of Athens (NTUA), Greece. Between 2007 and 2010, he held postdoctoral positions at the KTH Royal Institute, Stockholm Sweden, and the Laboratory for Information and Decision Systems (LIDS), MIT, Boston USA. He is currently a Professor and Head at the Division of Decision and Control, KTH Royal Institute of Technology, Stockholm, Sweden. His current research interests include Multi-Agent Systems, Hybrid Systems and Control, Robot Navigation and Networked Control. He serves in the Editorial Board of Automatica, and the IEEE Transactions on Control of Network Systems. He is an IEEE Fellow.
\end{IEEEbiography}

\vspace{-1cm}
\begin{IEEEbiography}[{\includegraphics[width=1in,height=1.25in,clip,keepaspectratio]{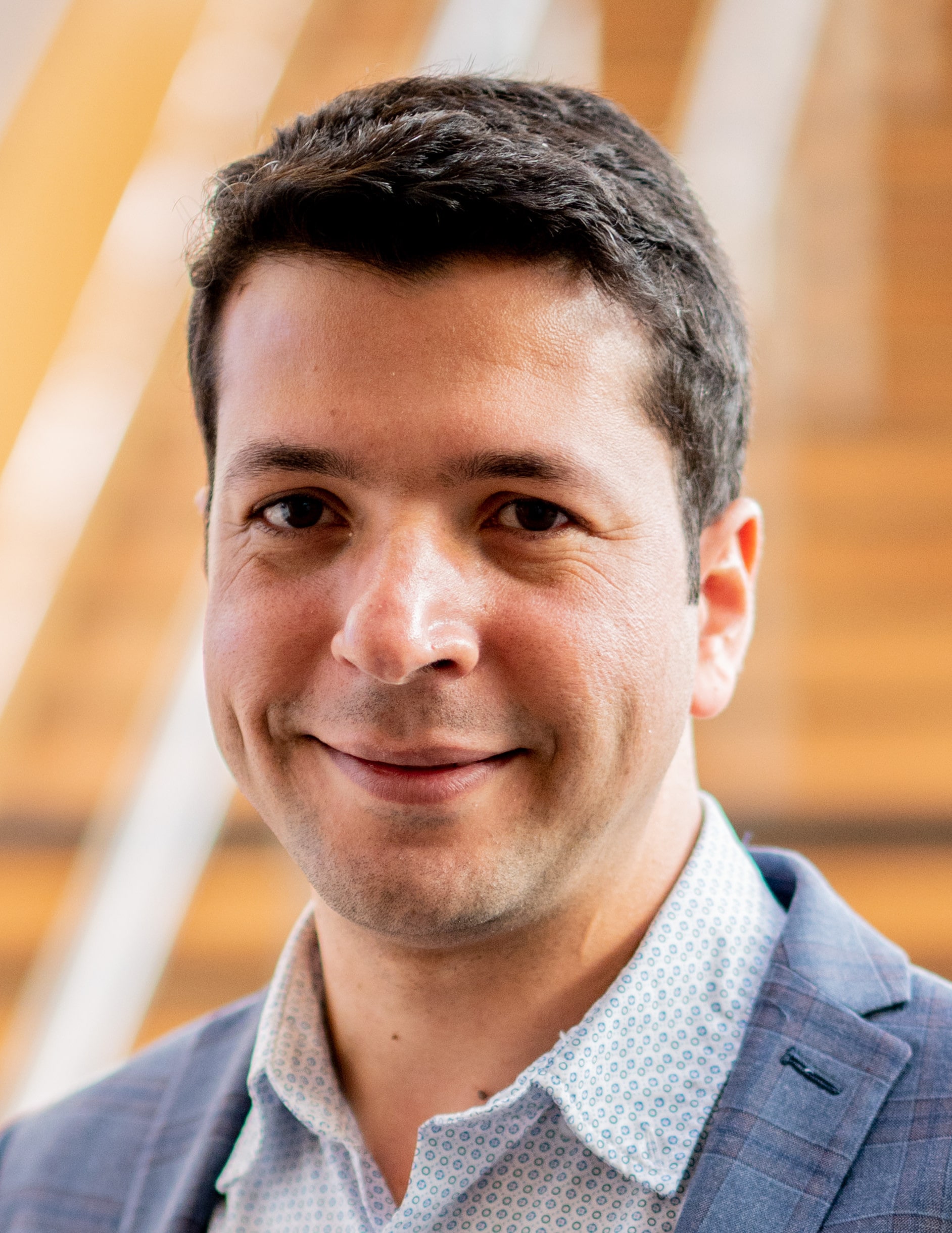}}]{Sadegh Soudjani} is a Research Group Leader at the Max Planck Institute for Software Systems, Germany. Previously, he was the Director of the AMBER Group at Newcastle University, United Kingdom, and Professor in Cyber-Physical Systems at Newcastle University. He received the B.Sc. degrees in mathematics and electrical engineering, and the M.Sc. degree in control engineering from the University of Tehran, Tehran, Iran, in 2007 and 2009, respectively. He received the Ph.D. degree in systems and control in November 2014 from the Delft Center for Systems and Control at the Delft University of Technology, Delft, the Netherlands. Before joining Newcastle University, he was a Postdoctoral Researcher at the Department of Computer Science, University of Oxford, United Kingdom, and at the Max Planck Institute for Software Systems, Germany.
\end{IEEEbiography}

\end{document}